\documentclass[a4paper,11pt]{article}

\usepackage{latexsym}   

\usepackage{amsmath,amssymb,amsthm}
\usepackage{dsfont}
\usepackage{graphicx}
\usepackage[british]{babel}
\usepackage[font=small,labelfont=bf]{caption}
\usepackage[utf8]{inputenc}
\usepackage{float}
\usepackage{xcolor}
\usepackage[colorlinks=true,linkcolor=blue,citecolor=red]{hyperref}
\usepackage{authblk}
\usepackage{blindtext}
\usepackage{epstopdf}
\usepackage{caption}
\usepackage{subcaption}


\newtheorem{remark}{Remark}
\newcommand{\R}{\mathbb{R}}
\newcommand{\be}{\begin{equation}}
\newcommand{\ee}{\end{equation}}
\newcommand{\benn}{\begin{equation*}}
\newcommand{\eenn}{\end{equation*}}
\setlength{\textwidth}{17cm}
\setlength{\oddsidemargin}{-0.5cm}
\setlength{\evensidemargin}{+0.5cm}
\setlength{\textheight}{22.5cm}
\setlength{\topmargin}{-1cm}    
\usepackage[margin=1cm]{caption}

\theoremstyle{plain}

\newtheorem{theorem}{Theorem}[section]


\renewcommand{\epsilon}{\varepsilon}
\begin{document}
\title{Effects of vaccination efficacy on wealth distribution\\ in kinetic epidemic models}

\author[1]{E. Bernardi \thanks{\tt emanuele.bernardi01@universitadipavia.it}}
\author[3]{L. Pareschi \thanks{\tt lorenzo.pareschi@unife.it}}
\author[1,2]{G. Toscani \thanks{\tt giuseppe.toscani@unipv.it}}
\author[1]{M. Zanella\thanks{\tt mattia.zanella@unipv.it}}

\affil[1]{Department of Mathematics "F. Casorati", University of Pavia, Italy}
\affil[2]{IMATI "E. Magenes", CNR, Pavia, Italy }
\affil[3]{Department of Mathematics and Computer Science, University of Ferrara, Italy}

\date{\today}
\maketitle

\abstract{The spreading of Covid-19 pandemic has highlighted the close link between economics and health in the context of emergency management. A widespread vaccination campaign is considered the main tool to contain the economic consequences. This paper will focus, at the level of wealth distribution modelling, on the economic improvements induced by the vaccination campaign in terms of its effectiveness rate. The economic trend during the pandemic is evaluated resorting to  a mathematical model joining a classical compartmental model including vaccinated individuals with a kinetic model of wealth distribution based on binary wealth exchanges. The interplay between wealth exchanges and the progress of the infectious disease is realized by assuming on the one hand that individuals in different compartments act differently in the economic process and on the other hand that the epidemic affects risk in economic transactions. Using the mathematical tools of kinetic theory, it is possible to identify the equilibrium states of the system and the formation of inequalities due to the pandemic in the wealth distribution of the population.
Numerical experiments highlight the importance of the vaccination campaign and its positive effects in reducing economic inequalities in the multi-agent society.}

\medskip
\noindent
{\bf Keywords:} Wealth distribution; Kinetic models; Wealth inequalities; Compartmental epidemic modelling; Vaccination campaign; Covid-19.

\section{Introduction}

In the early 2020s, the spread of the Covid-19 pandemic highlighted the close link between economics and health in the context of emergency management.  Because of this, assessing the impact of an epidemic phenomenon on a country's economy has emerged as one of the key aspects to consider in the context of containment strategies. From a mathematical point of view,  a systematic approach to the study of the effects on the economies of countries facing a severe pandemic is a very complex problem and a mathematical model can only provide rough indications of the possible consequences, based on simplifying assumptions about the key parameters driving the pandemic evolution.
The basic idea is to trace these phenomena back to the evolution of the so-called wealth distribution of a country, which measures how many people belong to increasing income levels. 

A first attempt to understand changes in wealth distribution  in the presence of epidemic spread was proposed in \cite{DPTZ} by combining the classical SIR compartmental model of susceptible, infected and recovered individuals \cite{H} with the kinetic model of wealth distribution introduced in \cite{CPT}, and assuming that, due to the presence of the pandemic, individuals in different compartments act differently in the economic process. Although the model was developed in a relatively simplified context it has provided a general framework for socio-epidemiological modeling that can be easily extended to more complex dynamics both in terms of economic transactions \cite{CC1,Furioli} and in terms of epidemic interactions \cite{Gatto, Par21}. We mention in this direction the recent survey \cite{Albi_etal} and the seminal approach proposed in \cite{GLN} investigating the economic effects of infectious diseases. 

More precisely, according to \cite{CPT},  the financial transactions in \cite{DPTZ} were based on the choice of two parameters. The first defines the so-called safeguard threshold, i.e.  the maximum percentage of money that the individual is willing to employ in a transaction, and the second the random risk inherent in the transaction characterized by its variance  through a spread proportional to the square of the individual's wealth. There, the time dependence of the variance was postulated by assuming that, in the presence of a significant epidemic spread, the variance of the risk tends to increase. This is in agreement with the financial market reactions that were often observed during the Covid-19 pandemic to announcements of rising numbers of infected people in several countries \cite{ZHJ}. Thanks to the model in \cite{DPTZ}, it was possible to qualitatively observe the effects of the pandemic in terms of the reduction of the middle class and the increase of social inequalities (see also \cite{Deaton, vB}).

The possibility, starting in early 2021, of launching a widespread vaccination campaign has led to general optimism about the ability to improve economic performance while limiting the health consequences of the epidemic. However, it is clear that the reduction of economic consequences is closely linked to the effectiveness of the vaccine in containing the infections. 

In this paper we will focus, at the level of wealth distribution, on the economic improvements induced by the vaccination campaign in terms of its percentage of effectiveness. The interplay between the economic trend and the pandemic will be evaluated by resorting to a mathematical model joining a kinetic model of wealth distribution based on binary transactions with a compartmental epidemic model including vaccinated individuals (see also \cite{GGHH}). In particular, a fraction of vaccinated individuals, which is determined by the efficacy of the vaccine, may contract the disease. Without intending to review the extensive literature on this topic we cite the recent papers \cite{Dolgin_Nature,Tow,BDOG,Colombo} that highlight the possible partial  immunity provided by current vaccinations. Moreover, the emergence of viral variants makes efficacy of the vaccine inherently non-constant and subject to the collective compliance to non-pharmaceutical interventions. 

The underlying theoretical framework we consider is that of kinetic models for collective social phenomena, which allows for the linking of microscopic agent-based behaviour to emerging observable patterns \cite{PT13}. In particular, mathematical modelling of wealth distribution has seen in recent decades a marked development \cite{BST,ChaCha00,ChChSt05,CPP,DY,DMT,Gup,CC1,Furioli,BM} from which, at least partially, the essential economic mechanisms that are responsible for the formation of large scale economic indicators like Pareto or Gini index have been understood \cite{MS,DPT,PT06}. 

The interplay between epidemic spread and the social economic background is described here as the result of interactions among a large number of individuals, each of which is characterized by the variable $w\in \mathbb R_+$ measuring the amount of wealth of a single agent. In this regard, as shown in \cite{Albi_etal,DPTZ,DPeTZ,DTZ,Zanella}, the fundamental tools of statistical physics allow the understanding of epidemiological dynamics by linking classical compartmental approaches with a statistical description of economic aspects. Indeed, the  multiscale nature of kinetic theory allows for the determination of the macroscopic (or aggregate) and measurable features of disease evolution \cite{Bellomo, LT21, PT13}. 

The rest of the paper is organized as follows. Section \ref{model} introduces the SIR-type system of kinetic equations that includes vaccinated individuals and combines the dynamics of wealth evolution with the spread of infectious disease in a system of interacting agents. Next, in Section \ref{properties}  we study the main mathematical properties of the system, and show that, through a suitable asymptotic procedure, the solution of the kinetic system tends to the solution of a system of Fokker-Planck type equations which exhibits explicit equilibria of inverse Gamma-type.  Finally, in Section \ref{numerics}, we investigate numerically the solutions of the Boltzmann-type kinetic system, and its Fokker-Planck asymptotics, along with the evolution of the Gini index,  characterizing the wealth inequalities. These simulations confirm the model's ability to describe phenomena characteristic of economic trends in situations compromised by the rapid spread of an epidemic, and their variations as a function of the effectiveness of the vaccination campaign.

\section{Wealth dynamics in epidemic phenomena}\label{model}

In this Section we present an extension of the  SIR-kinetic  compartmental description of epidemic spreading introduced in \cite{DPTZ}, which additionally takes into account the population of vaccinated individuals. The model consists of a system of four kinetic equations describing the evolution of wealth in presence of an infectious disease with partial efficacy of vaccination. The entire population is divided then into four compartments: susceptible individuals ($S$), who can  contract the disease; identified infectious individuals ($I$), who are recognized to have contracted the disease and can transmit it; vaccinated individuals ($V$), who have received a vaccine, but can still be, at least partially, infected and contagious, and the recovered individuals ($R$), who are healed and immune. The model can be easily adapted to include disease-related mortality and other compartments of interest in terms of available data, such as hospitalized individuals. We refer to \cite{APZ21,DTZ,H,Par21,Gatto} and the references therein for possible developments in these directions. It should be noted that, since we are referring to an advanced epidemic situation in which we assume the existence of a vaccine, the dynamics of unidentified asymptomatic individuals, so significant in the early stages of the Covid-19 pandemic, has become less relevant thanks to mass screening programs. For this reason we have chosen to employ only one compartment $I$ related to the identified infected individuals. To measure the aggregate effects of vaccination over the whole population we have considered the compartment $V$ with a given vaccine efficacy.  

The agents of each compartment are characterized uniquely by their wealth $w \ge 0$. Hence, we denote by $f_H(w,t)$, $H \in \{S,I,V,R\}$, the distributions of wealth at time $t \ge0$  in each compartment, such that $f_H(w,t)dw$ denote the fraction of agents belonging to the compartment $J$ which, at time $t \ge0$, are characterized by wealth between $w$ and $w + dw$. The total wealth distribution density is then defined by the sum of the distributions in all compartments
\[
f(w,t) = f_S(w,t) + f_I(w,t) + f_V(w,t) + f_R(w,t), \qquad \int_{\mathbb R_+} f(w,t)dw = 1, 
\]
for all $t \ge 0$. Hence, the fractions of the population belonging to each compartment is given by
\[
J(t) = \int_{\mathbb R_+} f_J(w,t)dw, \qquad J \in \{S,I,V,R\}. 
\]

We denote by $m_{J,\kappa}(t)$ the local momenta of order $\kappa$ for the wealth distributions in each compartment
\be\label{mk}
m_{\kappa,J}(t) = \dfrac{1}{J(t)} \int_{\mathbb R_+}w^\kappa f_J(w,t)dw,
\ee
and we denote with $m_\kappa(t)$ the moment of order $\kappa> 0$ of the wealth distribution $f(w,t)$
\[
m_{\kappa}(t) = \int_{\mathbb R_+}w^\kappa f(w,t)dw = \sum_{J \in \{S,I,V,R\}} J(t)\,m_{\kappa,J}(t).
\]

\subsection{The kinetic model}
Following \cite{DPTZ}, we assume that the evolution of the densities obeys a SIR-type compartmental model and that the wealth exchange process is influenced by the epidemic dynamics. This gives a system of four kinetic equations for the unknown distributions $f_H(w,t)$, $H \in \{S,I,V,R\}$, expressed by
\be
\label{eq:kin_SIVR}
\begin{split}
\partial_t f_S(w,t) &= -K(f_S,f_I)(w,t) -\alpha f_S(w,t)+ \sum_{J \in\{S,I,V,R\}} Q_{SJ}(f_S,f_J)(w,t),  \\
\partial_t f_I(w,t) &= K(f_S,f_I)(w,t) +(1-\zeta) K(f_V,f_I)(w,t) - \gamma_I f_I(w,t) + \sum_{J \in\{S,I,V,R\}} Q_{IJ}(f_I,f_J)(w,t) , \\
\partial_t f_V(w,t) &= \alpha f_S(w,t) - (1-\zeta) K( f_V,f_I)(w,t) + \sum_{J \in\{S,I,V,R\}} Q_{VJ}(f_V,f_J)(w,t), \\
\partial_t f_R (w,t) &= \gamma_I f_I(w,t) + \sum_{J \in\{S,I,V,R\}} Q_{RJ}(f_R,f_J)(w,t), 
\end{split}\ee
where $\gamma\ge0$ is the recovery rate for the infected  compartment, $\alpha \in [0,1]$ is the vaccination rate of individuals, while the term $0\le 1-\zeta \le 1$ quantifies the effectiveness of the vaccine, in such a way that high effectiveness corresponds to values close to one of the parameter $\zeta$. The operator $K(\cdot,\cdot)$ governs the transmission of the infection and is considered of the following form 
\be\label{eq:K}
K(f_H,f_I)(w,t) = f_H(w,t) \int_{\mathbb R_+} \beta(w,w_*) f_I(w_*,t)\,dw_*,
\ee
for any $H \in \{S,I,V,R\}$. In \eqref{eq:K} the function $\beta(w,w_*)\ge 0$ denotes the contact rate between people with wealth $w$ and, respectively,  $w_*$. A leading example for $\beta(w,w_*)$ is obtained by choosing analogously to \cite{DPTZ}
\begin{equation}
\label{eq:beta}
\beta(w,w_*) = \dfrac{\bar\beta}{(c + |w-w_*|)^\nu},
\end{equation}
where $\bar \beta>0$, $\nu>0$ and $c\ge 0$. According to the above contact rate, agents with similar wealth are more likely to interact. 

Finally, the operators $Q_{HJ}(f_H,f_J)$, $H,J\in\{S,I,V,R\}$,  characterize the evolution of the wealth evolution in each compartment due to wealth exchange activities between agents of the same class, or between agents of different classes $H$ and $J$. Their form follows the one originally proposed in the Cordier-Pareschi-Toscani model \cite{CPT}. An interaction between two individuals in compartment $H$ and $J$ with wealth pair $(w,w_*)$ leads to a wealth pair $(w^\prime_{JH},w^\prime_{HJ})$ defined by relations
\be\label{eq:binary}
\begin{split}
w^\prime_{HJ} &= (1-\lambda_H)w + \lambda_J w_*+ \eta_{HJ}w \\
w^\prime_{JH} &= (1-\lambda_J)w_* + \lambda_H w+ \eta_{JH}w_*,
\end{split}
\ee
with $H,J \in \{S,I,V,R\}$. In \eqref{eq:binary}  the constants $\lambda_H,\lambda_J \in (0,1)$ are exchange parameters defining the saving propensities $1-\lambda_H$ and $1-\lambda_J$, i.e.  the maximum percentage of money that individuals are willing to employ in a general monetary transaction. Note that the parameters are different in each compartment,  underlining the different behavior of agents  in presence of the pandemic. The choice $\lambda_V>\lambda_S$ for example reflects the fact that susceptible non vaccinated agents have reduced action in wealth exchanges due to various government restrictions with respect to vaccinated individuals.

Furthermore, $\eta_{JH} \ge -\lambda_H, \eta_{HJ} \ge -\lambda_J$ are  independent centered random variables with the same distribution $\Theta$ such that $\textrm{Var}(\eta_{HJ}) =\textrm{Var}( \eta_{HJ})= \sigma^2(t)$. The quantity $\sigma^2(t)$ represents the market risk, which is the same for the whole population and is influenced by the progress of the pandemic. This is in agreement to market reactions that have been observed during new epidemic waves, see e.g. \cite{ZHJ}. It is convenient to express the operators $Q_{HJ}(f_H,fJ)$ in weak form, i.e. the way these operators act on observable quantities \cite{PT13}. 

Let $\varphi(w)$ be a test function and let $\left\langle \cdot \right\rangle$ denote the expectation with respect to the pair of random variables $\eta_{JH}, \eta_{HJ}$ in the interaction process \eqref{eq:binary}. Then, for $H,J \in \{S,I,V,R\}$ we define the Boltzmann-type bilinear operators as follows
\be
\label{eq:Q}
\begin{split}
\int_{\mathbb R_+}\varphi(w)Q_{HJ}(f_H,f_J)(w,t) dw
 = \left\langle \int_{\mathbb R_+^2}(\varphi(w^\prime_{HJ})-\varphi(w)) f_H(w,t)f_J(w_*,t)dw\,dw_* \right\rangle
\end{split}
\ee
being $(w, w_*) \rightarrow (w_{JH}^\prime ,w_{HJ}^\prime)$ as in \eqref{eq:binary} and where  $\left \langle \cdot \right\rangle$ denotes the expectation with respect to the independent random variables $\eta_{HJ}, \eta_{HJ}$. 

Binary interactions between individuals \eqref{eq:binary} reflect the idea that wealth exchanges occur between pairs of agents who invest a fraction of their wealth in the presence of an equivalent good. In each case, such investments involve nondeterministic speculative risks that can provide additional wealth or loss of wealth. The aggregate behavior of the population is then provided by the operators \eqref{eq:Q} from which we obtain the emerging macroscopic trends of the binary exchanges considered in each epidemiological compartment. 

\begin{remark}
In the kinetic epidemic model \eqref{eq:kin_SIVR} the passage from susceptible to vaccinated is governed by a very simple dynamics that does not take into account possible vaccine limitations, as in the first phase of the vaccination campaign. In general, the vaccination rate $\alpha$ may depend on several factors like age, work status of individuals and time. 
It is worth to observe that, in addition to the natural dependency of the recovery rate $\gamma_I$ from age \cite{APZ21, APZ2, Colombo}, we may also consider wealth-dependent recovery rates to take into account that high wealth can provide access to better hospitals in some health systems, ensuring thus higher chance of recovery \cite{DPTZ}. We point the interested reader to \cite{Zanella} for a more detailed discussion based on available data.
\end{remark}

\subsection{Evolution of macroscopic quantities}\label{macro}
In the following we discuss the evolution of emerging macroscopic quantities form the kinetic model \eqref{eq:kin_SIVR}. Let $\varphi(w)$ be a test function. Choosing $\varphi(w) = 1$ in \eqref{eq:Q} we have
\[
\sum_{J \in \{S,I,V,R\}}\int_{\mathbb R_+}\varphi(w) Q_{HJ}(f_H,f_J)(w,t)dw = 0,
\]
which correspond to mass conservation, i.e. the conservation of the number of agents. If $\varphi(w)=w$ in \eqref{eq:Q} we get the evolution of the average wealth in each compartment, corresponding to the first quantity not conserved in time
\be
\label{eq:w1}
\begin{split}
\dfrac{d}{dt}m_{1,H}(t) &= \dfrac{1}{H(t)} \sum_{J\in\{S,I,V,R\}} \int_{\R_+^{2}}\langle w^\prime_{HJ}-w \rangle f_H(w,t)f_J(w_{*},t)dw dw_{*} \\
&=H(t) \sum_{J\in\{S,I,V,R\}}J(t)(\lambda_J m_{1,J}(t)-\lambda_Hm_{1,H}).
\end{split}
\ee
The total mean wealth is then conserved 
$$\dfrac{d}{dt}\sum_{H \in \{S,I,V,R\}} \int_{\mathbb R_+} w f_H(w,t)dw = \dfrac{d}{dt}m_{1} = 0.$$
The evolution of mass fractions can be easily obtained from \eqref{eq:kin_SIVR} by direct integration 
\begin{equation}
\label{eq:mass}
\begin{split}
\dfrac{d}{dt} S(t) &=- \int_{\mathbb R_+^2} \beta(w,w_*)f_S(w,t)f_I(w,t)dw\,dw_* - \alpha S(t),  \\
\dfrac{d}{dt} I(t) &= \int_{\mathbb R_+^2} \beta(w,w_*)f_S(w,t)f_I(w,t)dw\,dw_*  + (1-\zeta) \int_{\mathbb R_+^2} \beta(w,w_*)f_V(w,t)f_I(w,t)dw\,dw_*  - \gamma_I I(t), \\
\dfrac{d}{dt} V(t) &= \alpha S(t) - (1-\zeta)\int_{\mathbb R_+^2} \beta(w,w_*)f_V(w,t)f_I(w,t)dw\,dw_*, \\
\dfrac{d}{dt} R(t) &=\gamma_I I(t). 
\end{split}
\end{equation}
To get a closed form evolution of the macroscopic quantities we consider a constant rate function, $\beta(w,w_{*})=\bar\beta>0$, obtained from \eqref{eq:beta} for $\nu = 0$, and a constant in time market risk $\sigma^2(t)=\sigma^2$. Under these assumptions, thanks to mass conservation of Boltzmann-type operators \eqref{eq:Q} we get a classical SIR model with vaccination
\begin{equation}
\label{eq: SIVR}
\begin{split}
\dfrac{d}{dt} S(t) &=- \bar\beta S(t)I(t) - \alpha S(t),  \\
\dfrac{d}{dt} I(t) &= \bar \beta S(t)I(t)  + (1-\zeta) \bar\beta V(t)I(t)  - \gamma_I I(t), \\
\dfrac{d}{dt} V(t) &= \alpha S(t) - (1-\zeta)\bar\beta V(t)I(t), \\
\dfrac{d}{dt} R(t) &=\gamma_I I(t). 
\end{split}
\end{equation}
As a consequence for large times $t\to +\infty$ we have a disease free equilibrium state where $I(t)\to 0^+$, $S(t) \to 0^+$, $V(t)\to V^\infty$ and $R(t) \to R^\infty$ with $V^\infty + R^\infty=1$ (see \cite{H}).

The dynamics of mean wealths can be recovered from \eqref{eq:w1} as follows
\be
\label{eq:mean}
\begin{split}
S(t)\frac{d }{dt}m_{1,S}(t) &= S(t)( \bar m_1(t) -\lambda_S m_{1,S}(t)), \\
I(t)\frac{d}{dt}m_{1,I}(t) &=  \bar \beta S(t)I(t) (m_{1,S}-m_{1,I}) + \bar\beta (1-\xi)V(t) I(t) (m_{1,V} - m_{1,I})\\
&\quad + I(t) (\bar m_1 - \lambda_I m_{1,I}), \\
V(t)\frac{d}{dt}m_{1,V}(t) &= \alpha S(t) (m_{1,S}-m_{1,V}) + V(t) (\bar m_1 - \lambda_V m_{1,V}), \\
R(t)\frac{d}{dt}m_{1,R} (t) &= \gamma_I I(t)(m_{1,R}(t)-m_{1,I}(t)) +R(t)(\bar m_1(t)-\lambda_R m_{1,R}(t)), 
\end{split}
\ee
where we defined the weighted mean wealth
\be
\label{eq:totmean}
\bar m_1(t)=\sum_{J\in\{S,I,V,R\}}\lambda_Jm_{1,J}(t)J(t).
\ee
Therefore, from \eqref{eq:mean} we obtain that the large time behavior of the mean wealth satisfies 
\[
2\bar m_1^\infty - \lambda_V m_{1,V}^\infty  - \lambda_R m_{1,R}^\infty = 0. 
\]
Hence, we obtain
\[
\lambda_Vm^{\infty}_{1,V}=\lambda_Rm^{\infty}_{1,R},
\]
together with the constraint $R^{\infty}m^{\infty}_{R,1}+V^{\infty}m^{\infty}_{V,1}=m$ from the conservation of total mean wealth. Thanks to the last equalities we have that the asymptotic mean wealth in the compartments of vaccinated and recovered individuals are given by
\be
\label{eq:asymean}
    m^{\infty}_{1,V}=\frac{\lambda_R}{\lambda_RV^{\infty}+\lambda_VR^{\infty}}m, \quad\quad
    m^{\infty}_{1,R}=\frac{\lambda_V}{\lambda_RV^{\infty}+\lambda_VR^{\infty}}m.
\ee
Likewise, we obtain the system for the the second moments
\be
\label{eq:energy}
\begin{split}
S(t)\dfrac{d}{dt}m_{2,S}(t) &= (\lambda_S^2-2\lambda_S + \sigma^2)Sm_{2,S} + S(t)\bar m_2+ 2(1-\lambda_S)Sm_{1,S}\bar m_1, \\
I(t) \dfrac{d}{dt} m_{2,I}(t) &= \bar \beta SI(m_{2,S}-m_{2,I}) + (1-\zeta)\bar\beta VI(m_{2,V} - m_{2,I}) \\
&\quad+ (\lambda_I^2 -2\lambda_I + \sigma^2) Im_{2,I} + I\bar m_2 + 2(1-\lambda_I) I m_{1,I}\bar m_1,\\
V(t) \dfrac{d}{dt} m_{2,V}(t) &= \alpha S (m_{2,S} - m_{2,V}) + (\lambda_V^2 - 2\lambda_V + \sigma^2) V m_{2,V} + V \bar m_2\\
&\quad + 2(1-\lambda_V)Vm_{1,V}\bar m_1, \\
R(t) \dfrac{d}{dt} m_{2,R}(t)& = (\lambda_R^2  - 2\lambda_R + \sigma^2) R m_{2,R} + R \bar m_2 + 2(1-\lambda_R)Rm_{1,R}\bar m_1,
\end{split}
\ee
where $\bar m_1$ has been defined in \eqref{eq:totmean} and we have introduced the following notation
\[
\bar m_2(t)=\sum_{J\in\{S,I,V,R\}}\lambda_J^2m_{2,J}(t)J(t).
\]
 The evolution of the second moment for the whole system is governed by
\[
\frac{d}{dt}m_{2}(t)=\bar m_2(t)+\sum_{J\in\{S,I,V,R\}}\left(m_{J,2}(\lambda_J^2-2\lambda_J+\sigma)+2(1-\lambda_J)m_{J,2}\bar m_1(t)\right)J(t).
\]
For large times the second order moment for susceptible and infected is such that $m_{2,S},m_{2,I} \rightarrow 0^+$ for $t \rightarrow +\infty$. Therefore, $m_{2,V}^\infty$, $m^\infty_{2,R}$ are solutions to
\[
\begin{split}
(\lambda_V^2 - 2\lambda_V +\sigma^2) m_{2,V}^\infty + \bar m_2^\infty + (1-\lambda_V)m_{1,V}^\infty \bar m_1^\infty = 0,\\
(\lambda_R^2 - 2\lambda_R +\sigma^2) m_{2,R}^\infty + \bar m_2^\infty + (1-\lambda_R)m_{1,R}^\infty \bar m_1^\infty = 0.
\end{split}\]
from which we get
\[
\begin{split}
m_{2,R}^\infty = \dfrac{\lambda_V^2 (1-\lambda_V)V^\infty m_{1,V}^\infty \bar m_1^\infty  - A_V(1-\lambda_R)m_{1,R}^\infty \bar m_1^\infty}{A_V(\lambda_R^2(1+R^\infty) - 2\lambda_R + \sigma^2 ) - \lambda_V^2 \lambda_R^2 V^\infty R^\infty} \\
m_{2,V}^\infty = \dfrac{\lambda_R^2 (1-\lambda_R)R^\infty m_{1,R}^\infty \bar m_1^\infty  - A_R(1-\lambda_V)m_{1,V}^\infty \bar m_1^\infty}{A_R(\lambda_V^2(1+V^\infty) - 2\lambda_V + \sigma^2 ) - \lambda_R^2 \lambda_V^2 V^\infty R^\infty}
\end{split}
\]
where 
\[
A_H= \lambda_V^2 (1+H^\infty) - 2\lambda_V + \sigma^2, \qquad H \in \{V,R\},
\]
and $\bar m_1^\infty = \lambda_V m_{1,V}^\infty V^\infty + \lambda_R m_{1,R}^\infty R^\infty $ and $m_{1,V}^\infty, m_{1,R}^\infty$ have been obtained in \eqref{eq:asymean}. 

\begin{remark}
In the general case where a non constant incidence rate $\beta=\beta(w,w_*)$ is considered the macroscopic system of equations is not closed. Depending on the specific choice of $\beta$ and using the knowledge of the equilibrium states discussed in Section \ref{subsection 3.1} it is possible, through the classical hydrodynamic closure of kinetic theory, to derive epidemic models where the dynamics, instead of being homogeneous as in classical compartmental modeling, is influenced by the heterogeneous wealth status of individuals. We refer to \cite{Albi_etal, DPeTZ} for examples in this direction.
\end{remark}

\section{Properties of the kinetic model} \label{properties}
In this section we study the mathematical model \eqref{eq:kin_SIVR} from an analytical point of view, by proving well-posedness and convergence to equilibrium of the solution. To this aim we made suitable simplification assumptions on the contact rate by restricting to the case  $\beta(w,w_*) = \bar \beta$. We resort to classical mathematical approaches for kinetic equations to characterize the trend to equilibrium \cite{DPTZ,PT13}. In particular, taking into account methods for nonconservative systems, see e.g. \cite{BCT}, we provide an existence and uniqueness result. Given a function $f(w)\in L^1(\R_+)$, we define its Fourier transform as follows
\[
\hat{f}(z)=\int_{\R}e^{-iwz}f(w)dw.
\]
Under the above assumption on the contact rate, we rewrite \eqref{eq:kin_SIVR} in weak form
\be
\label{eq:kinweak_SIVR}
\begin{split}
\partial_t \int_{\R_+}\varphi(w) f_S(w,t)dw &= -\bar\beta I(t)\int_{\R_+}\varphi(w)f_S(w,t)dw -\alpha \int_{\R_+} \varphi(w)f_S(w,t)dw \\
&+ \sum_{J \in\{S,I,V,R\}} \int_{\R_+}\varphi(w)Q_{SJ}(f_S,f_J)(w,t)dw,  \\
\partial_t \int_{\R_+}\varphi(w) f_I(w,t)dw &= \bar\beta I(t)\int_{\R_+}\varphi(w)f_S(w,t)dw +(1-\zeta) \bar\beta I(t) \int_{\R_+}\varphi(w)f_V(w,t)dw\\
& - \gamma_I \int_{\R_+}\varphi(w)f_I(w,t)dw + \sum_{J \in\{S,I,V,R\}}\int_{\R_+}\varphi(w) Q_{IJ}(f_I,f_J)(w,t) dw, \\
\partial_t \int_{\R_+}\varphi(w) f_V(w,t)dw &= \alpha \int_{\R_+}\varphi(w) f_S(w,t)dw - (1-\zeta) \bar\beta I(t) \int_{\R_+}\varphi(w)f_V(w,t)dw \\
&+ \sum_{J \in\{S,I,V,R\}} \int_{\R_+}\varphi(w)Q_{VJ}(f_V,f_J)(w,t)dw, \\
\partial_t \int_{\R_+}\varphi(w) f_R (w,t)dw &= \gamma_I \int_{\R_+}\varphi(w) f_I(w,t)dw + \sum_{J \in\{S,I,V,R\}} \int_{\R_+}\varphi(w)Q_{RJ}(f_R,f_J)(w,t)dw. 
\end{split}\ee
Hence, we consider $\varphi(w) = e^{-izw}$ in \eqref{eq:kinweak_SIVR} to get
\be
\label{eq: FoSIAR}
\begin{split}
 \partial_t \hat{f}_{S}(z,t) &= -\bar\beta I(t)\hat{f}_{S}(z,t)-\alpha \hat{f}_S(z,t) + \sum_{J\in\{S,I,V,R\}}\hat{Q}(\hat{f}_S,\hat{f}_J)(z,t),\\
 \partial_t \hat{f}_{I}(z,t) &=\bar\beta I(t)\hat{f}_{S}(z,t) +(1-\zeta)\bar\beta\hat{f}_I(z,t)\hat{f}_V(z,t) -\gamma_{I}\hat{f}_{I}(z,t) + \sum_{J\in\{S,I,V,R\}}\hat{Q}(\hat{f}_I,\hat{f}_J)(z,t),\\
 \partial_t \hat{f}_{V}(z,t) &= \alpha\hat{f}_{S}(z,t)-(1-\zeta)\bar\beta\hat{f}_{I}(z,t)\hat{f}_{V}(z,t)+ \sum_{J\in\{S,I,V,R\}}\hat{Q}(\hat{f}_V,\hat{f}_J)(z,t),\\
 \partial_t \hat{f}_{R}(z,t) &= \gamma_{I}\hat{f}_{I}(z,t) +  \sum_{J\in\{S,I,V,R\}}\hat{Q}(\hat{f}_R,\hat{f}_J)(z,t).\\
 \end{split}
\ee
Similarly to \cite{DPTZ} the operators $\hat{Q}(\hat{f}_{H},\hat{f}_{J})(z,t)$ may be rewritten as follows
\[
\int_{\R_+}e^{-iwz}Q(f_H,f_J)dw=\langle\hat{f}_H(A_{HJ}z,t)\rangle\hat{f}_J(\lambda_{J}z,t)-J(t)\hat{f}_H(z,t),
\]
where
\[
A_{HJ}=1-\lambda_H+\eta_{HJ}.
\]
We assume that the parameters of the trading activity satisfy the condition
\be
\label{eq: nu}
\nu=\max_{H,J\in\{S,I,V,R\}}[\lambda_J^2+\langle A_{HJ}^2\rangle]<1.
\ee
Let $\mathcal{P}_s(\R_+)$ be the set of probability measures $f(w)$ with bounded $s-$moment,  and, for any pair of densities $f$ and $g$ in   $\mathcal{P}_s(\R_+)$ let us consider the class of metrics $d_s$ defined by
\be
\label{eq: Fdistance}
d_s(f,g)=\sup_{z\in\R}\frac{|\hat{f}(z)-\hat{g}(z)|}{|z|^s},
\ee
where $\hat f$ and $\hat g$ denote the Fourier transforms of $f$ and $g$. Then the distance \eqref{eq: Fdistance} is well-defined and finite for any pair of probability measures with equal moments up to order $[s]$\footnote{ where $[s]$ denotes the integer part of $s$}, if $s$ is a real number or up to $s-1$, if $s$ is an integer \cite{PT13,TV99}. 

Inequality \eqref{eq: nu} combined with a Fourier-based distance allows  to obtain an exponential convergence to  equilibrium for system \eqref{eq:kin_SIVR}. This condition is verified whenever
\[
\sigma^2<2\min_{J\in\{S,I,V,R\}}\lambda_J(1-\lambda_J),
\]
namely when the market risk is not too big in relationship withe the saving propensities. To study the large-time behavior of the solution to systems like \eqref{eq: FoSIAR} we follow \cite{DPTZ, PT13}. 

\noindent Then we have the following result 
\begin{theorem}
\label{Theo: Conv}
Let $f_J(w,t)$ and $g_J(w,t)$, $J\in\{S,I,V,R\}$, be two solutions of the kynetic system \eqref{eq:kin_SIVR}, corresponding to the initial values $f_J(w,0)$ and $g_J(w,0)$ such that $d_2(f_J(w,0),g_J(w,0))$, $J\in\{S,I,V,R\}$, is finite. Then, if condition \eqref{eq: nu} holds, the Fourier-based distance $d_2(f_J(w,t),g_J(w,t))$ decays exponentially in time toward zero and the following holds
\be
\label{eq: conv}
\sum_{J\in\{S,I,V,R\}}d_2(f_J(w,t),g_J(w,t)) < \sum_{J\in\{S,I,V,R\}}d_2(f_J(w,0),g_J(w,0))\exp\{-(1-\nu)t\}.
\ee
\end{theorem}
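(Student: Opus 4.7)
The strategy is to work in Fourier via \eqref{eq: FoSIAR}, split the difference system into a linear compartmental part that is a contraction in the $\sum_H d_2$-seminorm and a bilinear (collisional) remainder controlled by $\nu \sum_H v_H$ under \eqref{eq: nu}, and close the estimate with a Duhamel representation together with Gronwall's lemma. Set $h_H(z,t) := \hat f_H(z,t) - \hat g_H(z,t)$ and $v_H(t) := d_2(f_H(\cdot,t),g_H(\cdot,t)) = \sup_z |h_H(z,t)|/|z|^2$. Finiteness of $v_H(0)$ forces the masses and first moments of $f_H$ and $g_H$ to agree at $t=0$, and the closed-form ODEs \eqref{eq: SIVR}, \eqref{eq:mean} under $\beta \equiv \bar\beta$ propagate these equalities to every $t \ge 0$, which keeps each $v_H(t)$ well defined and ensures $\hat f_J(0,t) = \hat g_J(0,t) = J(t)$. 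Using the algebraic identity
\benn
\hat Q(\hat f_H, \hat f_J) - \hat Q(\hat g_H, \hat g_J) = \langle h_H(A_{HJ}z)\rangle\, \hat f_J(\lambda_J z) + \langle \hat g_H(A_{HJ}z)\rangle\, h_J(\lambda_J z) - J(t)\, h_H(z),
\eenn
subtracting two copies of \eqref{eq: FoSIAR} yields, for $\vec h := (h_S, h_I, h_V, h_R)^\top$, a non-autonomous linear ODE $\partial_t \vec h = [M(t) - \mathrm{Id}]\,\vec h + \vec B$, where $M(t)$ is the compartmental flow matrix of \eqref{eq: SIVR}, $\mathrm{Id}$ is the $4\times 4$ identity, and $B_H(z,t) := \sum_J [\langle h_H(A_{HJ}z)\rangle \hat f_J(\lambda_J z) + \langle \hat g_H(A_{HJ}z)\rangle h_J(\lambda_J z)]$.

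The key point is that $M$ is a Metzler matrix whose columns sum to $0$ (mass conservation at the SIVR level), so $N(t) := M(t) - \mathrm{Id}$ is Metzler with columns summing to $-1$. Consequently the fundamental solution $P(t,s)$ of $\partial_t P = N(t) P$ has non-negative entries and satisfies $\mathbf{1}^\top P(t,s) = e^{-(t-s)}\,\mathbf{1}^\top$. Duhamel's formula plus the componentwise triangle inequality then gives
\benn
|h_H(z,t)| \le \sum_K P_{HK}(t,0)\, |h_K(z,0)| + \int_0^t \sum_K P_{HK}(t,s)\, |B_K(z,s)|\, ds.
\eenn
Dividing by $|z|^2$, taking $\sup_z$ term by term, summing over $H$, and using $\sum_H P_{HK}(t,s) = e^{-(t-s)}$ delivers
\benn
\sum_H v_H(t) \le e^{-t}\, \sum_H v_H(0) + \int_0^t e^{-(t-s)} \sum_K b_K(s)\, ds, \qquad b_K(s) := \sup_z |B_K(z,s)|/|z|^2.
\eenn

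It remains to estimate $\sum_K b_K$. Using $|\hat f_J(\lambda_J z)| \le J(s)$, $|\langle \hat g_H(A_{HJ}z)\rangle| \le H(s)$, together with $\sup_z |h_H(a z)|/|z|^2 = a^2 v_H(s)$ applied pointwise inside $\langle \cdot \rangle$, one obtains
\benn
\frac{|B_{HJ}(z,s)|}{|z|^2} \le \langle A_{HJ}^2\rangle\, J(s)\, v_H(s) + \lambda_J^2\, H(s)\, v_J(s).
\eenn
Summing over $H$ and $J$, the crucial reduction is that $\langle A_{HJ}^2\rangle = (1-\lambda_H)^2 + \sigma^2$ does not depend on $J$, so both double sums collapse via $\sum_J J(s) = \sum_H H(s) = 1$ to $\sum_K [\lambda_K^2 + \langle A_{KK}^2\rangle]\, v_K(s)$, which is bounded by $\nu \sum_K v_K(s)$ thanks to \eqref{eq: nu}. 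Setting $U(t) := e^t \sum_H v_H(t)$, the bound becomes $U(t) \le \sum_H v_H(0) + \nu \int_0^t U(s)\, ds$, and Gronwall's lemma produces $\sum_H v_H(t) \le \sum_H v_H(0)\, e^{-(1-\nu) t}$, which is precisely \eqref{eq: conv}.

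The main obstacle is this combinatorial reduction in the last step: the naive pairwise bound on $\sum_{H,J} |B_{HJ}|/|z|^2$ does not telescope, and only the $J$-independence of $\langle A_{HJ}^2\rangle$ allows both column sums $\sum_J J(s)$ and $\sum_H H(s)$ to collapse to $1$, leaving the diagonal combinations $\lambda_H^2 + \langle A_{HH}^2\rangle$ which are dominated by $\nu$ by definition \eqref{eq: nu}. A preparatory but essential technical point is the verification of the Metzler structure and zero column sums of $M$, which makes the compartmental block an $\ell^1$-contraction at rate $e^{-(t-s)}$ before the collisional terms are handled by Duhamel.
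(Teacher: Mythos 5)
Your argument is correct and follows exactly the strategy the paper sets up for this theorem (and defers in detail to the references \cite{DPTZ,PT13}): pass to Fourier variables, use the scaling property of the $d_2$ metric to bound the difference of the bilinear gain terms by $\langle A_{HJ}^2\rangle$ and $\lambda_J^2$ factors, exploit mass conservation so that the compartmental transfer terms telescope when summed over $\{S,I,V,R\}$, and close with Gronwall under condition \eqref{eq: nu}. Your Metzler/column-sum packaging of the epidemic coupling and the observation that equal masses and first moments propagate via the closed macroscopic systems \eqref{eq: SIVR} and \eqref{eq:mean} are exactly the ingredients the paper's (omitted) proof relies on, so this is essentially the same proof.
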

\noindent The previous result and the equation \eqref{eq: conv} give us the contractivity of the system in the $d_2$ metric which will be the essential to prove the existence theorem. Theorem \ref{Theo: Conv} allows us to further investigate the properties of the steady state $f_J^{\infty}(w)$, $J\in\{S,I,V,R\}$.
 

\noindent In order to obtain an existence result we need to introduce a subset of $\mathcal{P}_2(\R)$
\be
\label{eq:set}
\mathcal{D}_{m_1,m_2}:=\Bigg\{ F\in\mathcal{P}_2(\R):\;\int_{\R}vdF(v)=m_1,\;\int_{\R}v^2dF(v)=m_{2}\Bigg\}.
\ee

Following \cite{TV99} it is possible to prove that $\mathcal{D}_{m_1,m_2}$ is a metric Banach space with the $d_2(\cdot,\cdot)$ metric. Now, we define
\[
\mathcal{D}^{\infty}:=\mathcal{D}_{m^{\infty}_{V,1},m^{\infty}_{V,2}}\times\mathcal{D}_{m^{\infty}_{R,1},m^{\infty}_{R,2}}
\]
as the product space of two sets like \eqref{eq:set} where the momenta are those of the steady states for the relative distributions $f_J(w)$, for $J\in\{V,R\}$ (we are only considering these two classes since for large time $I,S\rightarrow0^+$). We also recall a variant of the metric used in Theorem \ref{Theo: Conv} 
\be
\label{distance}
\overline{d}_2(f,g):=\sum_{J\in\{V,R\}}d_2(f_J(w,t),g_J(w,t)).
\ee
Now, we are able to prove 
\begin{theorem}
\label{Theo: Exist}
If the initial value $f_0(w)=f(w,0)\in\mathcal{D}^{\infty}$ and condition \eqref{eq: nu} holds then, the system
\be
\label{eq:kin_SIVR1}
\begin{split}
	\partial_t f_V(w,t)  & = \sum_{J \in\{V,R\}} Q_{VJ}(f_V,f_J)(w,t), \\
	\partial_t f_R (w,t) &=  \sum_{J \in\{V,R\}} Q_{RJ}(f_R,f_J)(w,t), 
\end{split}\ee
 has a unique steady state $f^{\infty}(w)$, and it also belongs to $\mathcal{D}^{\infty}$.
\end{theorem}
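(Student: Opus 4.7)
The plan is to prove existence and uniqueness of the steady state by a Banach fixed-point argument applied to the evolution semigroup of system \eqref{eq:kin_SIVR1} on the complete metric space $(\mathcal{D}^{\infty},\overline{d}_2)$, exploiting the contractivity already established in Theorem \ref{Theo: Conv}.

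First, I would verify the set-up. The space $(\mathcal{D}^{\infty},\overline{d}_2)$ is complete as a product of two complete metric spaces of the form \eqref{eq:set} (from \cite{TV99}), and the Fourier distance $d_2$ is finite on pairs in $\mathcal{D}_{m_1,m_2}$ because such pairs share the same first moment (cf.\ the footnote after \eqref{eq: Fdistance}). Next, I would show that the flow $T_t:f_0\mapsto(f_V(\cdot,t),f_R(\cdot,t))$ of \eqref{eq:kin_SIVR1} leaves $\mathcal{D}^{\infty}$ invariant. The bilinear operators $Q_{HJ}$ preserve the masses of $V$ and $R$, and the values $m^{\infty}_{1,V},m^{\infty}_{1,R}$ in \eqref{eq:asymean} and $m^{\infty}_{2,V},m^{\infty}_{2,R}$ derived in Section \ref{macro} were obtained precisely by annihilating the right-hand sides of the moment equations with $S\equiv I\equiv 0$. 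Hence they are equilibria of the induced moment ODEs, so a trajectory of $T_t$ starting at these values keeps them constant in time, and therefore $T_t(\mathcal{D}^{\infty})\subseteq \mathcal{D}^{\infty}$.

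Second, I would invoke Theorem \ref{Theo: Conv} applied to the reduced two-compartment system (its proof applies verbatim, since the exchange structure between $V$ and $R$ is unchanged and the source contributions from $S,I$ have disappeared). For any two initial data $f_0,g_0\in\mathcal{D}^{\infty}$ one has
\[
\overline{d}_2(T_t f_0,T_t g_0)\le \overline{d}_2(f_0,g_0)\,e^{-(1-\nu)t}.
\]
Choosing any $t^{\star}>0$ makes $T_{t^{\star}}\colon\mathcal{D}^{\infty}\to\mathcal{D}^{\infty}$ a strict contraction of ratio $e^{-(1-\nu)t^{\star}}<1$ on a complete metric space, and Banach's contraction principle yields a unique fixed point $f^{\infty}\in\mathcal{D}^{\infty}$.

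Third, I would promote this time-$t^{\star}$ fixed point to a genuine stationary solution via the semigroup property: the identity $T_{t^{\star}}(T_s f^{\infty})=T_s(T_{t^{\star}}f^{\infty})=T_s f^{\infty}$ shows that $T_s f^{\infty}$ is a fixed point of $T_{t^{\star}}$ for every $s\ge 0$, and uniqueness forces $T_s f^{\infty}=f^{\infty}$ for all $s$. Uniqueness of $f^{\infty}$ within $\mathcal{D}^{\infty}$ is then inherited from the uniqueness of the fixed point. The main obstacle is the invariance step: one must use the explicit algebra of Section \ref{macro} (the first-moment relation $\lambda_V m^{\infty}_{1,V}=\lambda_R m^{\infty}_{1,R}$ together with the linear system solved by $m^{\infty}_{2,V},m^{\infty}_{2,R}$) to guarantee that prescribing these moments at $t=0$ really freezes them for all $t>0$. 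Once this is in place, the contraction and fixed-point step follow routinely from Theorem \ref{Theo: Conv} and the completeness of $(\mathcal{D}^{\infty},\overline{d}_2)$.
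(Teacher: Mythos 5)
Your argument is essentially the paper's own proof: both consider the flow map $T_t$ on $(\mathcal{D}^{\infty},\overline{d}_2)$, use the contractivity from Theorem \ref{Theo: Conv} to make $T_t$ a strict contraction, and conclude by the Banach fixed point theorem. You in fact supply two details the paper leaves implicit --- the invariance of $\mathcal{D}^{\infty}$ under the flow via the stationarity of the moment ODEs, and the semigroup identity $T_{t^{\star}}(T_s f^{\infty})=T_s f^{\infty}$ promoting the time-$t^{\star}$ fixed point to a genuine steady state --- which only strengthens the argument.
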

\begin{proof}
Let us consider the flow map
\be
\label{eq:flow}
T_t:\Big(\mathcal{D}^{\infty},\bar d_2\Big)\to\Big(\mathcal{D}^{\infty},\bar d_2\Big)
\ee
which , for any time $t>0$, is given by $T_t(f_0(w))=f(t)=(f_V(w,t), f_R(w,t))$ where $f(t)$ is the solution of \eqref{eq:kin_SIVR1} at time $t$ with $f(w,0)=f_0(w)\in\mathcal{D}^{\infty}$. Thanks to \eqref{eq: conv} we have
\[
\overline{d}_2\big(T_t(f_0(w)),\;T_t(g_0(w))\big) < \overline{d}_2\big(f_0(w),\;g_0(w)\big)\exp\{-(1-\nu)t\}
\]
which is a strict contraction for \eqref{eq:flow} with constant $\exp\{-(1-\nu)t\}<1$. Now, it is easy to see that $\big(\mathcal{D}^{\infty}, \overline{d}_2\big)$ is a Banach space and therefore, Banach fixed point theorem ensures the existence and uniqueness for the steady state in $\mathcal{D}^{\infty}$.
\end{proof}

\begin{remark}
Similar results may be obtained in the more realistic case $\beta(w,w_*) = \beta(w-w_*)$ since the transmission operator $K(\cdot,\cdot)$ defined in \eqref{eq:K} possesses, in this case, a convolution structure which naturally converts into a product in the Fourier space. We omit the details. 
\end{remark}

\subsection{Fokker-Planck scaling and steady states}
\label{subsection 3.1}

In the general case, it is difficult to compute analytically the large time behaviour of the compartmental kinetic system \eqref{eq:kin_SIVR}. 
A deeper insight on the steady states can be obtained through the so-called quasi-invariant limit procedure \cite{CPT, DPTZ, FPTT16, PT13}. The goal is to derive a simplified Fokker-Planck model for which the study of the asymptotic properties is much easier. It is worth to mention that this approach  is inspired by the so-called grazing collision limit of the Boltzmann equation, see \cite{Cer,Villani}. 

The driving idea is to scale at the same time interactions and trading frequency. As a consequence, the equilibrium of the wealth distribution is reached  faster with respect to the time scale of the epidemic. Hence, given $\epsilon\ll1 $ we introduce the following scaling 
\be
\label{eq:scaling}
\begin{split}
\lambda_{S}\rightarrow\epsilon\lambda_{S}, 
\quad\lambda_{I}\rightarrow\epsilon\lambda_{I},\quad\lambda_{V}\rightarrow\epsilon\lambda_{V},
\quad\lambda_{R}\rightarrow\epsilon\lambda_{R},\\
\sigma^2\rightarrow\epsilon\sigma^2,\quad\beta(w,w_*)\rightarrow\epsilon\beta(w,w_*),\quad\gamma_{I}\rightarrow\epsilon\gamma_{I},
\end{split}
\ee
together with the time scaling $t \rightarrow t/\epsilon$.  We denote as $Q^{\epsilon}_{HJ}(\cdot,\cdot)$, $H,J\in\{S,I,V,R\}$, the scaled interaction terms. Using a Taylor expansion for small values of $\epsilon$ we get \cite{DPTZ} 
\[
\begin{split}
&\dfrac{1}{\epsilon} \int_{\R_+}Q^\epsilon_{HJ}(f_H,f_J)(w,t)\varphi(w)dw \\
&\quad= 
\int_{\R_+}\left\{ -\varphi^\prime(w)(w \lambda_H J - m_{1,J}\lambda_J) + \dfrac{\sigma^2}{2}\varphi^{\prime\prime}(w)w^2 J(t)\right\}f_H(w,t)dw + {O(\epsilon)}. 
\end{split}\]
Integrating back by parts, in the limit $\epsilon\rightarrow 0$ we obtain the system of Fokker-Planck equations
\be
\label{eq:Fokker-Planck}
\begin{split}
\frac{\partial f_S(w,t)}{\partial t}&=-K(f_S,f_I)(w,t) -\alpha f_S(w,t) + \frac{\partial }{\partial w}\{[w\lambda_S-\bar{m}(t)]f_S(w,t)\}\\
&\quad +\frac{\sigma^2}{2}\frac{\partial ^2}{\partial w^2}(w^2f_S(w,t)),\\
\frac{\partial f_I(w,t)}{\partial t}&=K(f_S,f_I)(w,t)+(1-\zeta) K(f_V,f_I)(w,t) - \gamma_{I}f_I(w,t)\\
&\quad + \frac{\partial }{\partial w}\{[w\lambda_I - \bar{m}(t)]f_I(w,t)\} +\frac{\sigma^2}{2}\frac{\partial ^2}{\partial w^2}(w^2f_I(w,t)),\\
\frac{\partial f_V(w,t)}{\partial t}& =\alpha f_S(w,t)-(1-\zeta) K(f_V,f_I)(w,t) +\frac{\partial }{\partial w}\{[w\lambda_V-\bar{m}(t)]f_V(w,t)\}\\ 
&\quad +\frac{\sigma^2}{2}\frac{\partial ^2}{\partial w^2}(w^2f_V(w,t)),\\
\frac{\partial f_R(w,t)}{\partial t}&=\gamma_I(w,t)+\frac{\partial }{\partial w}\{[w\lambda_R-\bar{m}(t)]f_R(w,t)\}+\frac{\sigma^2}{2}\frac{\partial ^2}{\partial w^2}(w^2f_R(w,t)),
\end{split}
\ee
where $\bar m$ has been defined in \eqref{eq:totmean}. The above Fokker-Planck system is complemented with the following boundary conditions
\[
\label{eq:boundary}
\begin{split}
\frac{\partial }{\partial w}[w^2g_J(w,t)]|_{w=0}=0\qquad
[w\lambda_J-\overline{m}]g_J+\frac{\sigma}{2}\frac{\partial }{\partial w}(w^2g_J)\bigg|_{w=0}=0.
\end{split}
\]
We can verify under suitable assumptions that the Fokker-Planck  \eqref{eq:Fokker-Planck} possesses an explicitly computable steady state. Let us consider the case of constant contact rate, i.e. $\beta(w,w_{*})=\bar\beta$. Since for large times $S,I \rightarrow 0^+$ we get that the stationary states $f_V^{\infty}(w)$ and $f_{R}^{\infty}(w)$ solve the following equations
\[
\begin{split}
    \lambda_V\frac{\partial}{\partial w}\bigg[(w-m_V^{\infty}
    )f^{\infty}_V(w)\bigg]+\frac{\sigma^2}{2}\frac{\partial^2}{\partial w^2}[w^2f_V^{\infty}(w)]=0,\\
    \lambda_R\frac{\partial}{\partial w}\bigg[(w-m_R^{\infty}
    )f^{\infty}_R(w)\bigg]+\frac{\sigma^2}{2}\frac{\partial^2}{\partial w^2}[w^2f_R^{\infty}(w)]=0.
\end{split}
\]
From the above equalities we obtain that the two steady states are inverse Gamma densities
\begin{equation}
\label{eq:steady}
f_V^{\infty}(w)=V^{\infty}\frac{\kappa^{\mu_V}}{\Gamma(\mu_V)}\frac{e^{-\frac{\kappa}{w}}}{w^{1+\mu_V}} \quad\quad\quad
f_R^{\infty}(w)=R^{\infty}\frac{\kappa^{\mu_R}}{\Gamma(\mu_R)}\frac{e^{-\frac{\kappa}{w}}}{w^{1+\mu_R}}
\end{equation} 
with Pareto indices defined as follows
\[
\mu_V=1+2\frac{\lambda_{V}}{\sigma^2},\quad\quad\mu_R=1+2\frac{\lambda_R}{\sigma^2},
\]
\[
\kappa=(\mu_V-1)m_V^{\infty}=(\mu_R-1)m_R^{\infty}=\frac{2\lambda_{R}\lambda_{V}}{\sigma^2(\lambda_{R}V^{\infty}+\lambda_{V}R^{\infty})}m. 
\label{kappa}
\]
Consequently, the global steady state is a mixture of inverse Gamma distribution 
\be
\label{eq:global}
f^{\infty}(w)=f_V^{\infty}(w)+f_R^{\infty}(w), 
\ee
which may present a bimodal shape with different intensity. The formation of two peaks at the equilibrium is due to the fact that  we have two different maxima corresponding to the points
\be
\overline{w}_V=\frac{\kappa}{\mu_V+1}=\frac{\lambda_{R}\lambda_{V}}{(\lambda_V+\sigma)(\lambda_RV^{\infty}+\lambda_VR^{\infty})}m,
\label{maxV}
\ee
\be
\overline{w}_R=\frac{\kappa}{\mu_R+1}=\frac{\lambda_{R}\lambda_{V}}{(\lambda_R+\sigma)(\lambda_RV^{\infty}+\lambda_VR^{\infty})}m,
\label{maxR}
\ee
for the vaccinated and for the recovered wealth distributions, respectively. In the next section we report in the resulting profiles for different choices of $\lambda_V$, $\lambda_R$, $\sigma$ and $V^{\infty}, R^{\infty}$. 

\begin{remark}
The emergence of multimodal equilibrium wealth distribution has been classically linked to the appearance of new inequalities in highly stressed societies, see e.g. \cite{Ferrero,Gup,Deaton}. In these cases, the economic segregation of part of the society leads to the pauperisation of substantial layers of the middle class. In the present case, the different economic impact played by agents in each compartment is capable to shape wealth distribution towards a bimodal distribution. Indeed, the trading propensities modelling personal response to the economic scenario, can be substantially modified by the progression of the epidemic and the vaccine efficacy.
\end{remark}


\section{Numerical results}\label{numerics}
In this section we study the impact of vaccination on the equilibrium of the kinetic system through several numerical simulations. This allows us to show the model's ability to describe different situations of wealth distribution in the presence of epidemic dynamics. In particular, we will adopt standard Direct Simulation Monte Carlo  methods to simulate the system of kinetic equations \eqref{eq:kin_SIVR}, see \cite{PR,PT13} and the references therein. In all the subsequent tests we will consider $N= 10^5$ agents and the densities are reconstructed through standard histograms.  

In the first test, we verify numerically the convergence of the solution to the kinetic system \eqref{eq:kin_SIVR} to the solution of the Fokker-Planck system \eqref{eq:Fokker-Planck} under the scaling \eqref{eq:scaling}. Then, we study the emergence of wealth inequalities, measured through the Gini index, in relation to the effectiveness of the vaccine. These results are obtained both in case of a constant market risk variance $\sigma^2$ and in the case of a variance that depends on the current epidemic situation. Lastly, we introduce the possibility that also the effectiveness of the vaccine is affected by the number of positive cases. This situation mimics the realistic case of diffusion of viral variants for which an up-to-date vaccine may be not immediately available. 

\subsection{Test 1: Long time behavior and convergence to equilibrium}
\label{test1}

In this test we want to observe the convergence of the numerical solution of the kinetic system \eqref{eq:kin_SIVR} to the one of the Fokker-Planck system \eqref{eq:Fokker-Planck} in the quasi-invariant limit introduced in Section \ref{subsection 3.1}. We consider the simplified case where $\beta(w,w_*) = \bar \beta = 0.2$,  $\gamma_I = 1/12$, $\zeta = 0.9$ for which we obtained the steady distributions in \eqref{eq:steady}. These values are representative of realistic dynamics during the beginning of the Covid-19 pandemic, see e.g. \cite{Albi_etal,APZ21,APZ2,DT21,Par21,Gatto,Zanella}. 

At time $t = 0$ we consider an uniform distribution over the interval $[0,2]$ 
\begin{equation}
\label{eq:f0}
f(w,0)=\frac{1}{2}\chi(w\in[0,2])
\end{equation}
 where $\chi(\cdot)$ is the indicator function. The distributions of the epidemic compartments are
\begin{equation}
\label{eq:fractions}
f_S(w,0)=\rho_S f(w),\quad f_I(w,0)=\rho_I f(w),\quad f_V(w,0)=\rho_V f(w),\quad f_R(w,0)=\rho_R f(w),
\end{equation}
where the mass fractions are $\rho_I=7.5\times10^{-3}$, $\rho_V=0$, $\rho_R=4\times10^{-2}$ and $\rho_S=1-(\rho_I+\rho_V+\rho_R)$. Furthermore, we consider the value $\sigma^2=0.02$ for the market risk. In Figure \ref{fig:1} we show the numerical solution at time $T=300$ of \eqref{eq:kin_SIVR} in the scaling regime \eqref{eq:scaling} with $\epsilon =1, 0.5, 10^{-3}$. 

In particular, provided an epidemic dynamics such that $V^{\infty}=0.51$ and $R^{\infty}=0.49$, we give numerical evidence of the aforementioned convergence in two regimes expressing increasing safeguard thresholds $1-\lambda_J$, $J \in \{S,I,V,R\}$,  for non-vaccinated agents
\begin{itemize}
\item[$i)$] $\lambda_S=0.15$, $\lambda_I=0.10$, $\lambda_V=0.30$, $\lambda_R=0.20$
\item[$ii)$] $\lambda_S=0.10$, $\lambda_I=0.05$, $\lambda_V=0.30$, $\lambda_R=0.15$
\end{itemize}
 where the same values of $V^{\infty}$ and $R^{\infty}$ are unchanged. 

We observe that, if $\epsilon\ll1$, the Fokker-Planck asymptotic distribution is a consistent approximation of the equilibrium distribution of the Boltzmann-type model. In both cases the global distribution is a mixture of inverse Gamma densities and in the right plot of Figure \ref{fig:1} we can clearly observe a bimodal shape for the wealth distribution. To highlight this we have drawn the maximum points of the distributions $f^{\infty}_V$, $f^{\infty}_R$ which are at $\overline{w}_V$, $\overline{w}_R$ defined in \eqref{maxV}, \eqref{maxR}.

\begin{figure}
    \centering
    \includegraphics[scale = 0.55]{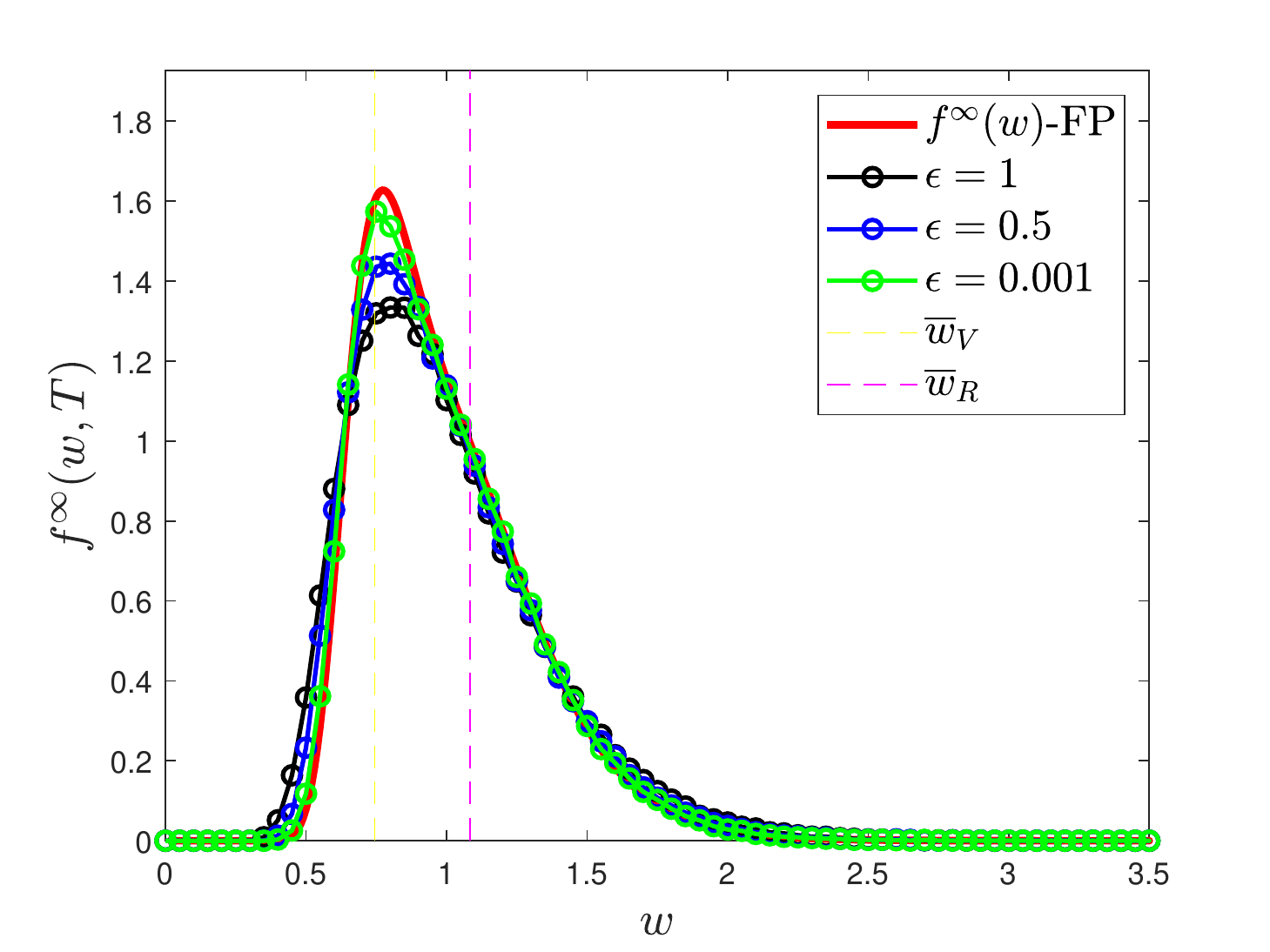}
    \includegraphics[scale = 0.55]{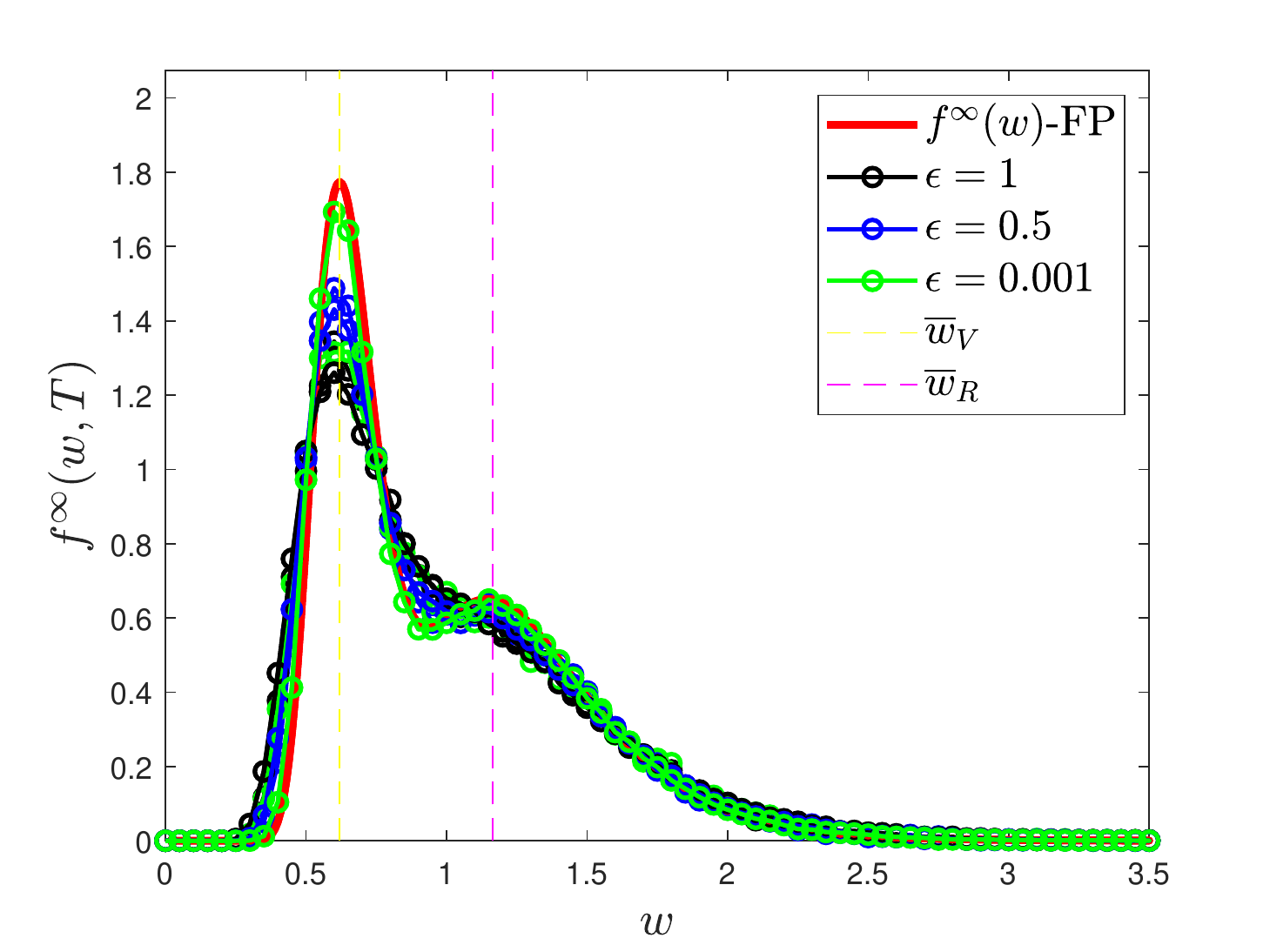}
    \caption{\textbf{Test 1}. Comparison of the wealth distributions at the end of the epidemic for the kinetic system \eqref{eq:kin_SIVR} with the explicit Fokker-Planck asymptotics \eqref{eq:global} with scaling parameters $\epsilon=1, \frac{1}{2}, 10^{-3}$. Left: $\lambda_S=0.15$, $\lambda_I=0.10$, $\lambda_S=0.30$, $\lambda_R=0.20$. Right: $\lambda_S=0.10$, $\lambda_I=0.05$, $\lambda_V=0.30$ $\lambda_R=0.15$. In both cases we fixed $\bar\beta = 0.2$, $\gamma_I = 1/12$, $\alpha = 0.005$, $\zeta=0.9$ and $\sigma^2 = 0.02$. }
\label{fig:1}
\end{figure}

\begin{figure}
    \centering
        \includegraphics[scale = 0.55]{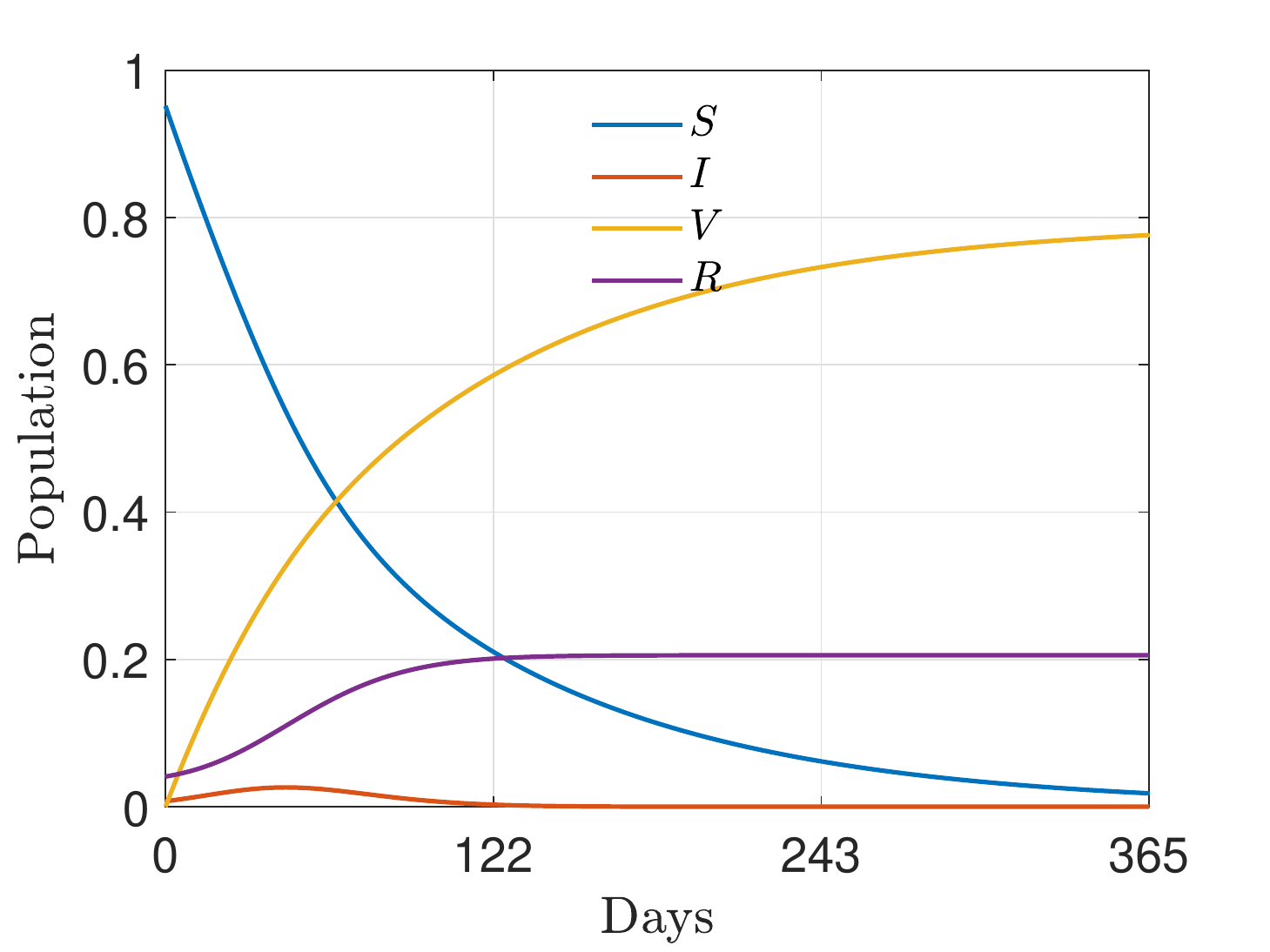}
    \includegraphics[scale = 0.55]{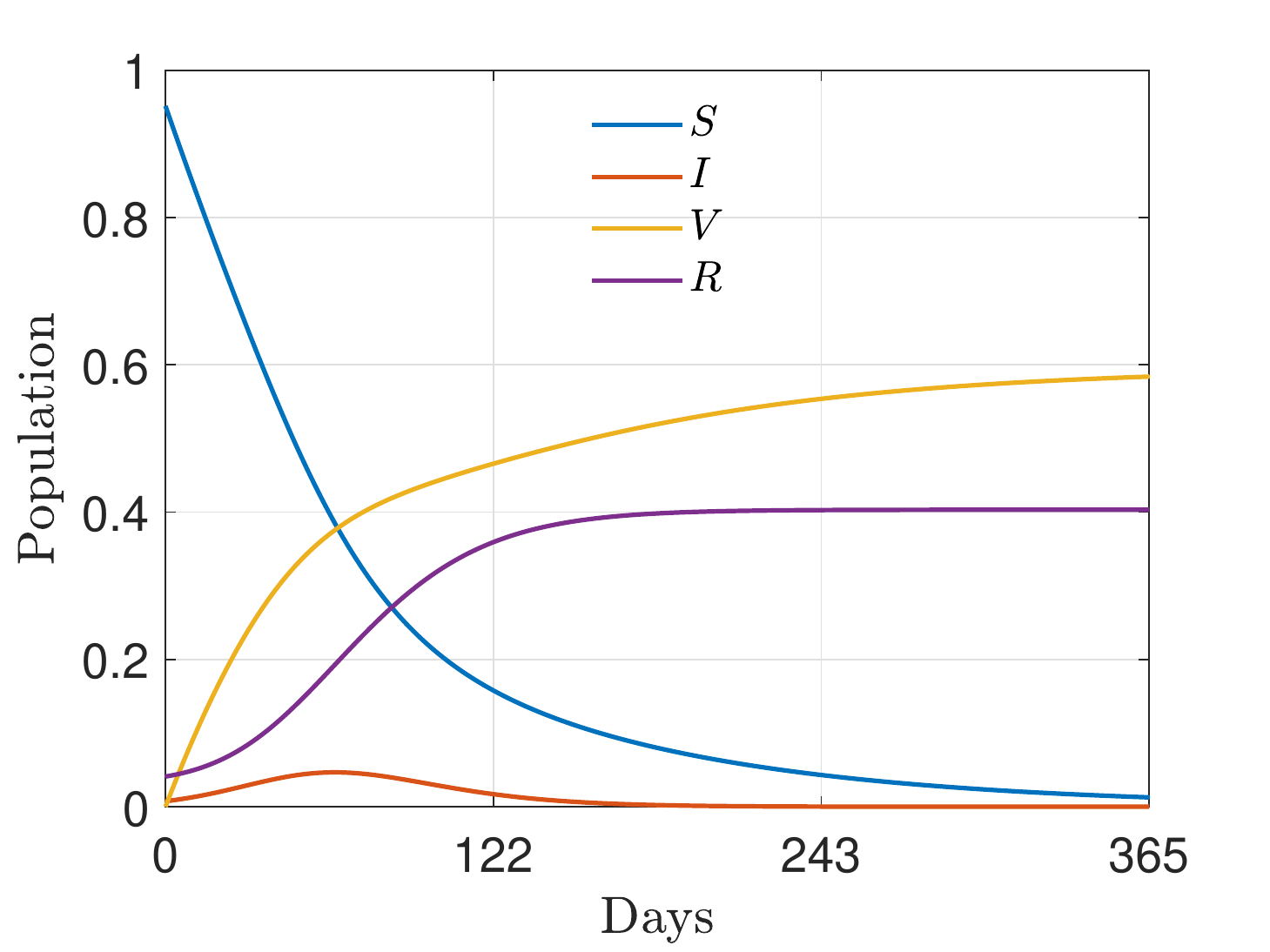}
    \caption{\textbf{Test 2}. Evolution of the epidemic dynamics from  \eqref{eq: SIVR} for the choice of parameters $\bar \beta = 0.15$, $\gamma_I = 1/12$, $\alpha = 0.01$ and $\zeta = 0.95$ (left) $\zeta = 0.55$ (right).  }
     \label{fig:2}
\end{figure}

\subsection{Test 2: Wealth inequalities and vaccination campaign}
\label{subsection: 4.3}
In the second test case we analyze the emergence of wealth inequalities through the computation of the Gini index. In particular, we concentrate on the effects linked to the outbreak of the infection and on the impact of an effective vaccination campaign. 


We fix the epidemic parameters as follows: $\bar\beta=0.15$, $\gamma_I=1/12$ and a vaccination rate $\alpha=10^{-2}$. Furthermore, we consider two different vaccine efficacies $\zeta = 0.95$ corresponding to a high efficacy of the vaccine, and $\zeta = 0.55$ corresponding to a low efficacy of the vaccine.  Since we are interested in the behavior of the system until the conclusion of the epidemic phenomenon,  the final time is fixed  as $T=810$, corresponding to a wide time span. 
We keep the same values for the saving propensities and market risk defined for Test \ref{test1}. Hence, we consider initial wealth distributions as in \eqref{eq:f0} and mass fractions as in \eqref{eq:fractions} with $\rho_I=7\times 10^{-3}$, $\rho_V=0$, $\rho_R=4\times10^{-2}$ and $\rho_S=1-(\rho_I+\rho_V+\rho_R)$. The scaling coefficient is $\epsilon=5\times 10^{-2}$. The resulting epidemic dynamic is reported in Figure \ref{fig:2}.

\begin{figure}
    \centering
        \includegraphics[scale = 0.55]{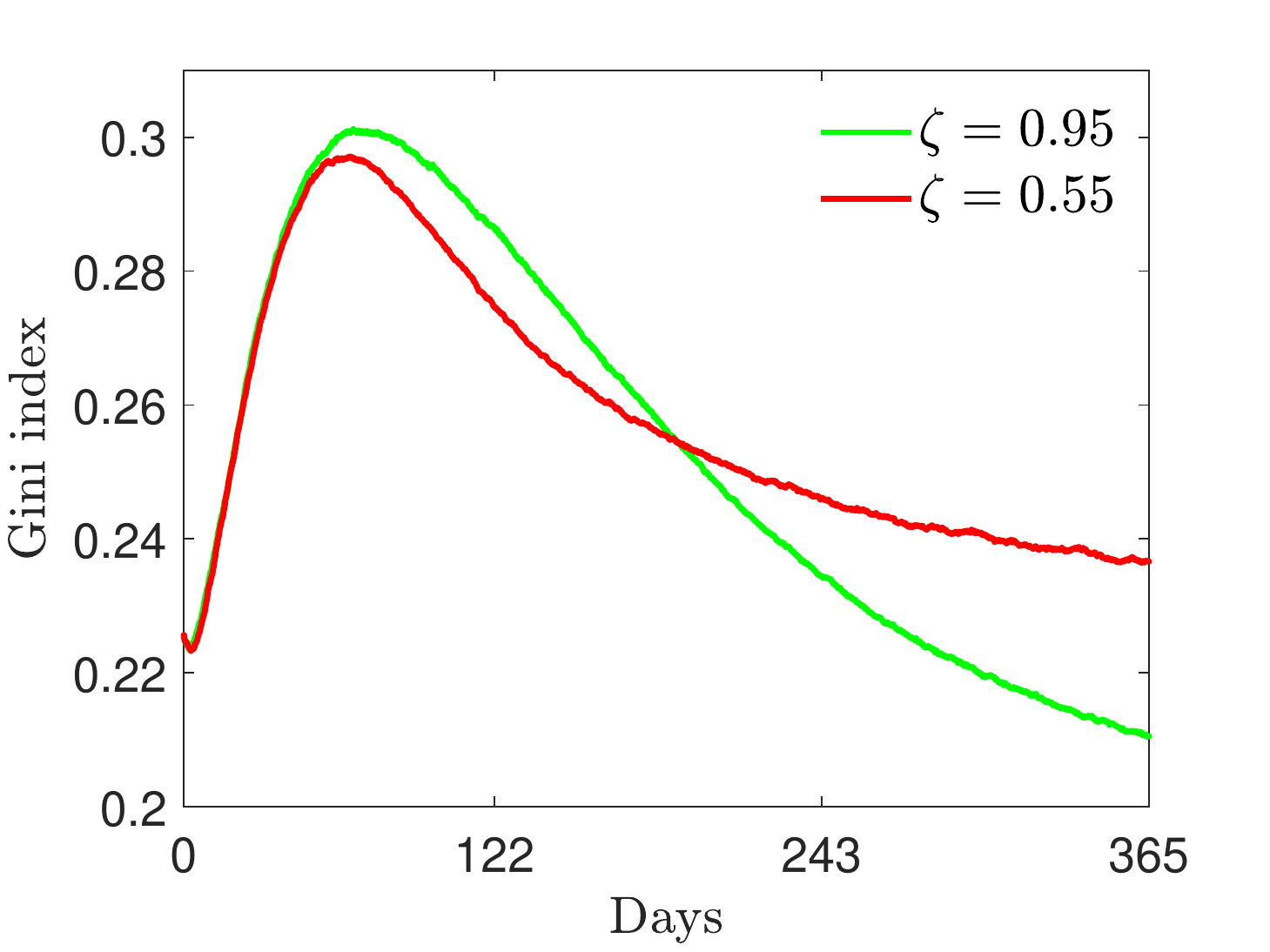}
          \caption{\textbf{Test 2}. Evolution of Gini index under the epidemic dynamics described in Figure \ref{fig:2} and for the choice of parameters  $\lambda_S=0.10$, $\lambda_I=0.07$, $\lambda_V=0.30$, $\lambda_R=0.15$. Two vaccine efficacies has been considered:  $95\%$ (green) and $55\%$ (red). In both cases we considered $\sigma^2=0.02$.}
     \label{fig:2b}
\end{figure}

We evaluate the Gini coefficient of the emerging equilibrium distributions.  The Gini index is commonly computed from the Lorenz curve
\[
L(F(w))=\int_{0}^{w}f^{\infty}(w_*)w_*dw_*,
\]
where $F(w)=\displaystyle\int_0^wf^{\infty}(w_*)dw_*$ and is defined as follows
\[
G_1=1-2\int_{0}^{1}L(x)dx.
\]
This index should be understood as a measure of a country's wealth discrepancy and it varies in $[0,1]$, where in the case $G_1=0$ the country is in a situation of perfect equality whereas $G_1=1$ means complete inequality. A reasonable value for this parameters is in the range $[0.2,0.5]$ for most western economies \cite{DPT}.

In Figure \ref{fig:2b} we show the evolution of the Gini index with the parameters described above. We may observe that the epidemic peak leads to an increasing of inequalities that is then absorbed for later times in relation to the efficacy of the vaccine. Consequently, only when the vaccine is made available to the majority of the population does it actually contribute to reducing inequalities, otherwise it may have the opposite effect. This reminds us of how, on a global level, the importance of making vaccines available to all countries should be seen not only in terms of epidemics, but also in terms of reducing economic inequalities.
In all the considered cases, in the long time, the Gini index decreases thanks to the vaccine.

Next, we consider the case where the market risk is related to the behaviour of the epidemic spreading and there is a linear relation between the market risk and the number of infected. The introduction of a time-dependent market risk $\sigma^2(t)$ mimics an instantaneous influence of the pandemic on the volatility of a market economy as often observed. Therefore, we consider the following 
\be
\label{eq:varsigma}
\sigma^2(t)=\sigma^2_0(1+\mu I(t))
\ee
where $\mu>0$ expresses the effective influence of the epidemic dynamics on the market volatility and $\sigma^2_0>0$ is an ineradicable baseline risk.

\begin{figure}
	\centering
		\includegraphics[scale = 0.55]{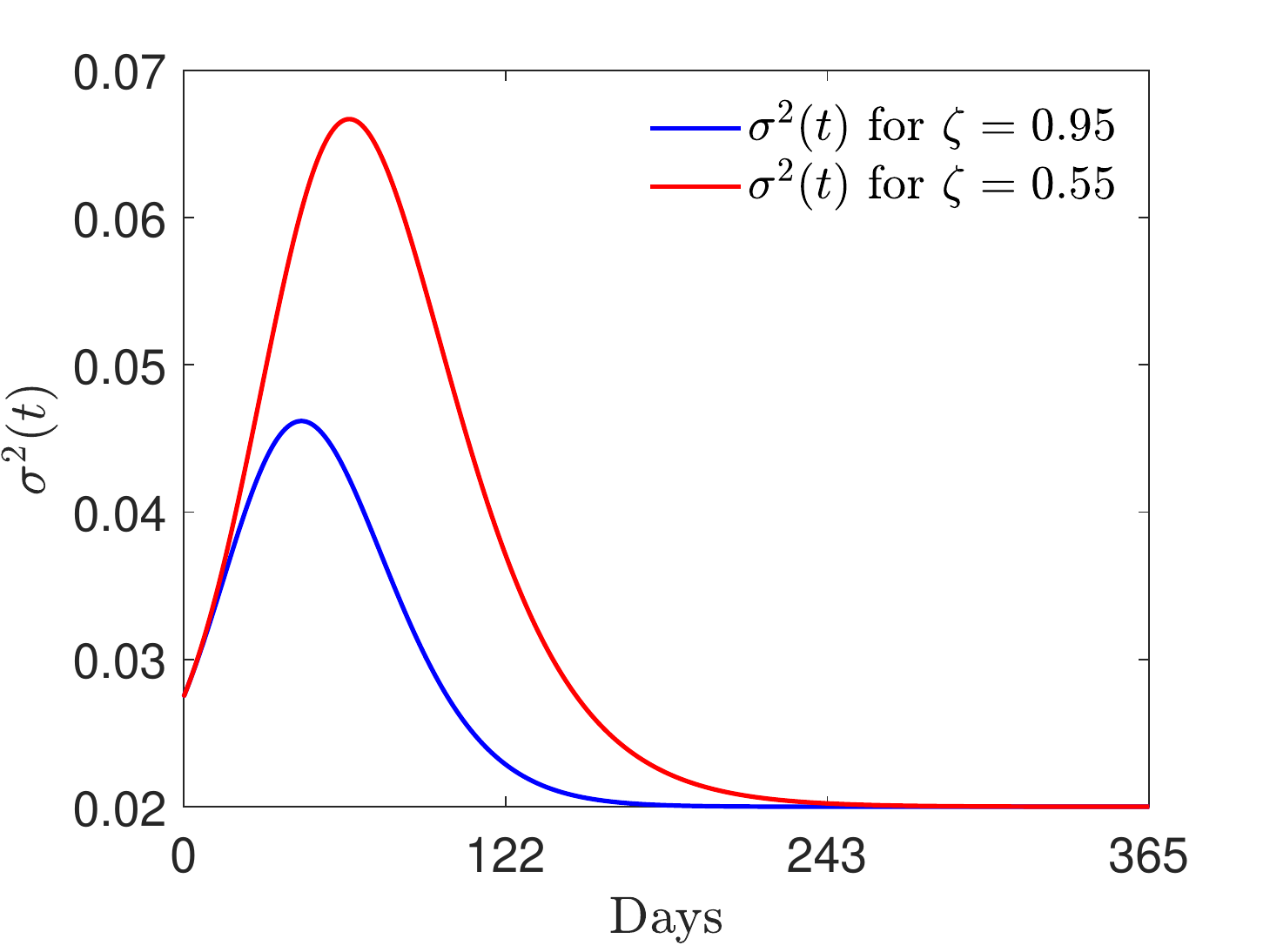}
	\includegraphics[scale = 0.55]{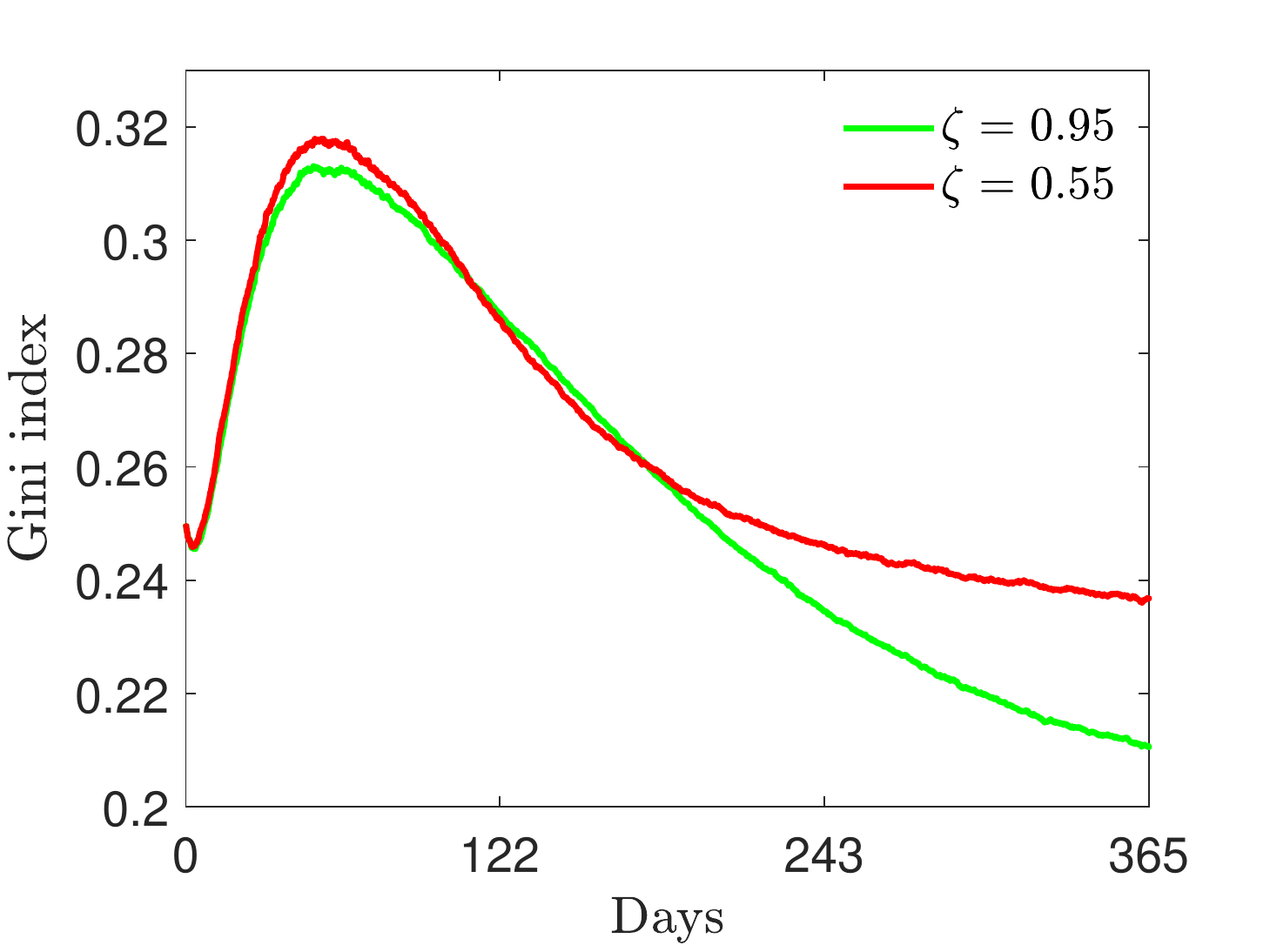}
	\caption{\textbf{Test 2}. Left: evolution of the market risk  $\sigma^2(t)$ as defined in \eqref{eq:varsigma} with $\mu=50$ and $\sigma_0^2=0.02$ in case of two different vaccine efficacy. Right: evolution of Gini index under the epidemic dynamics described in Figure \ref{fig:2} and epidemic-dependent market risk parameter \eqref{eq:varsigma}.  }
	\label{fig:2c}
\end{figure}

\begin{figure}
	\centering
		\includegraphics[scale = 0.55]{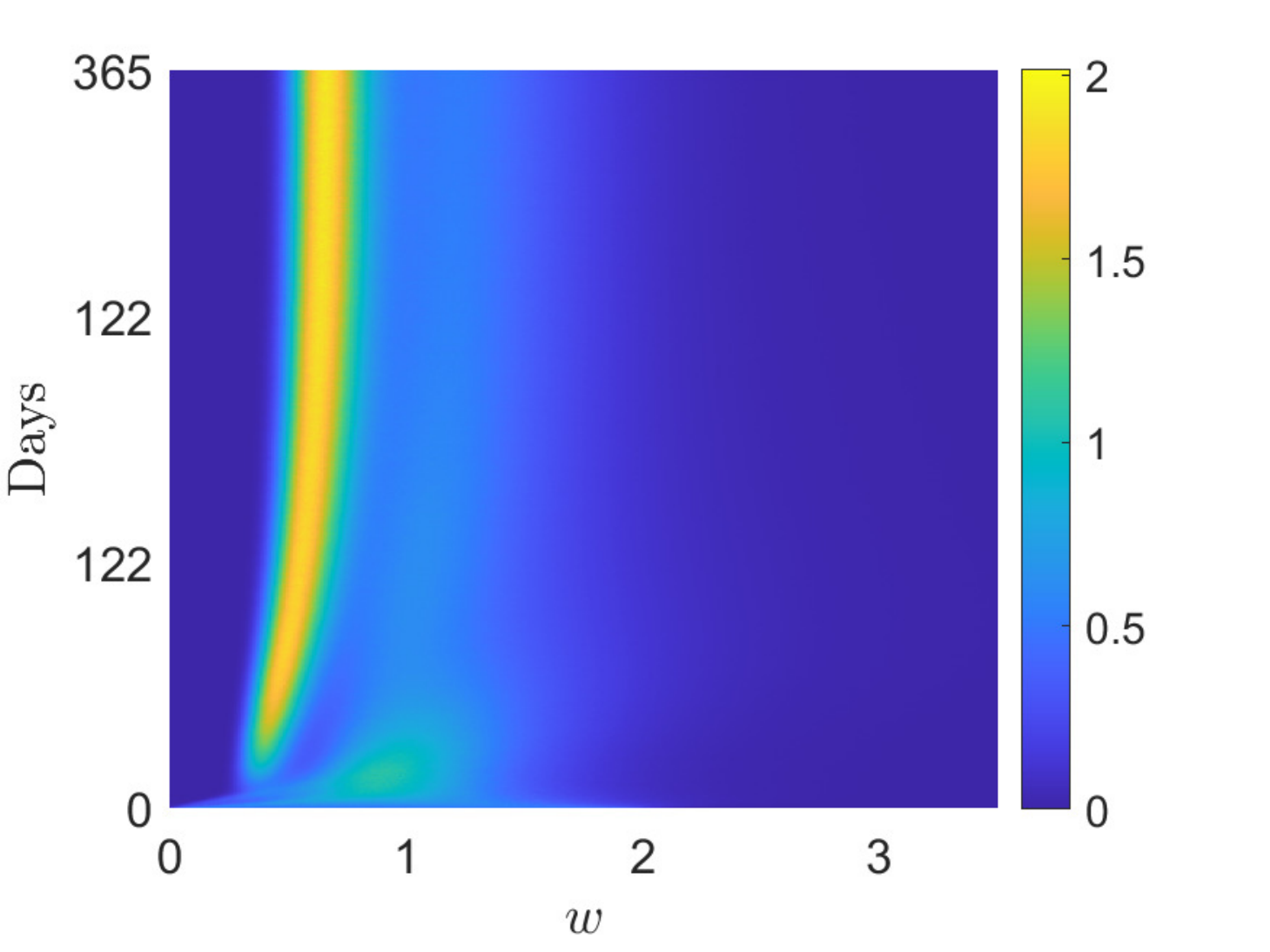}
		\includegraphics[scale = 0.55]{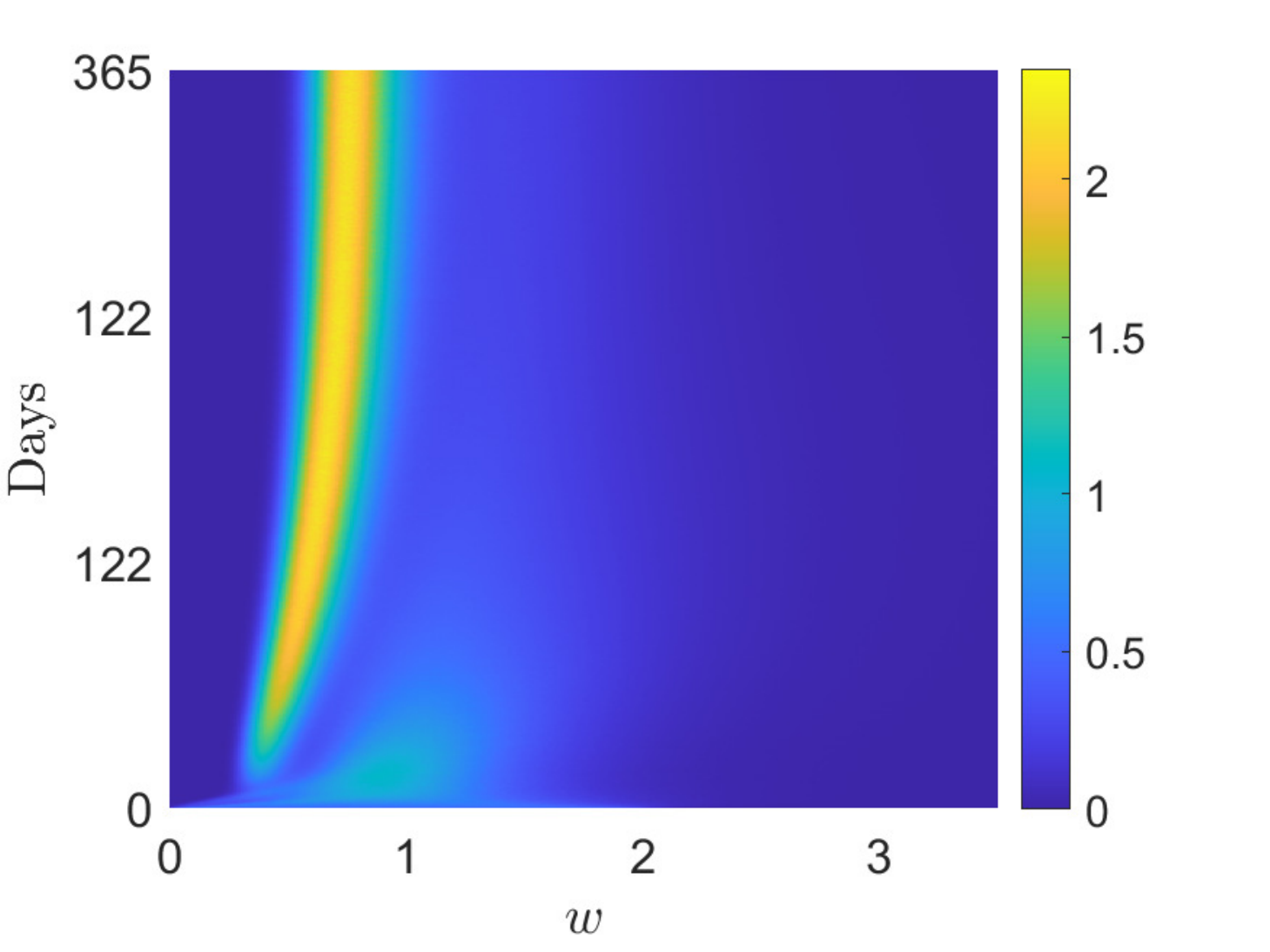}\\
		\includegraphics[scale = 0.55]{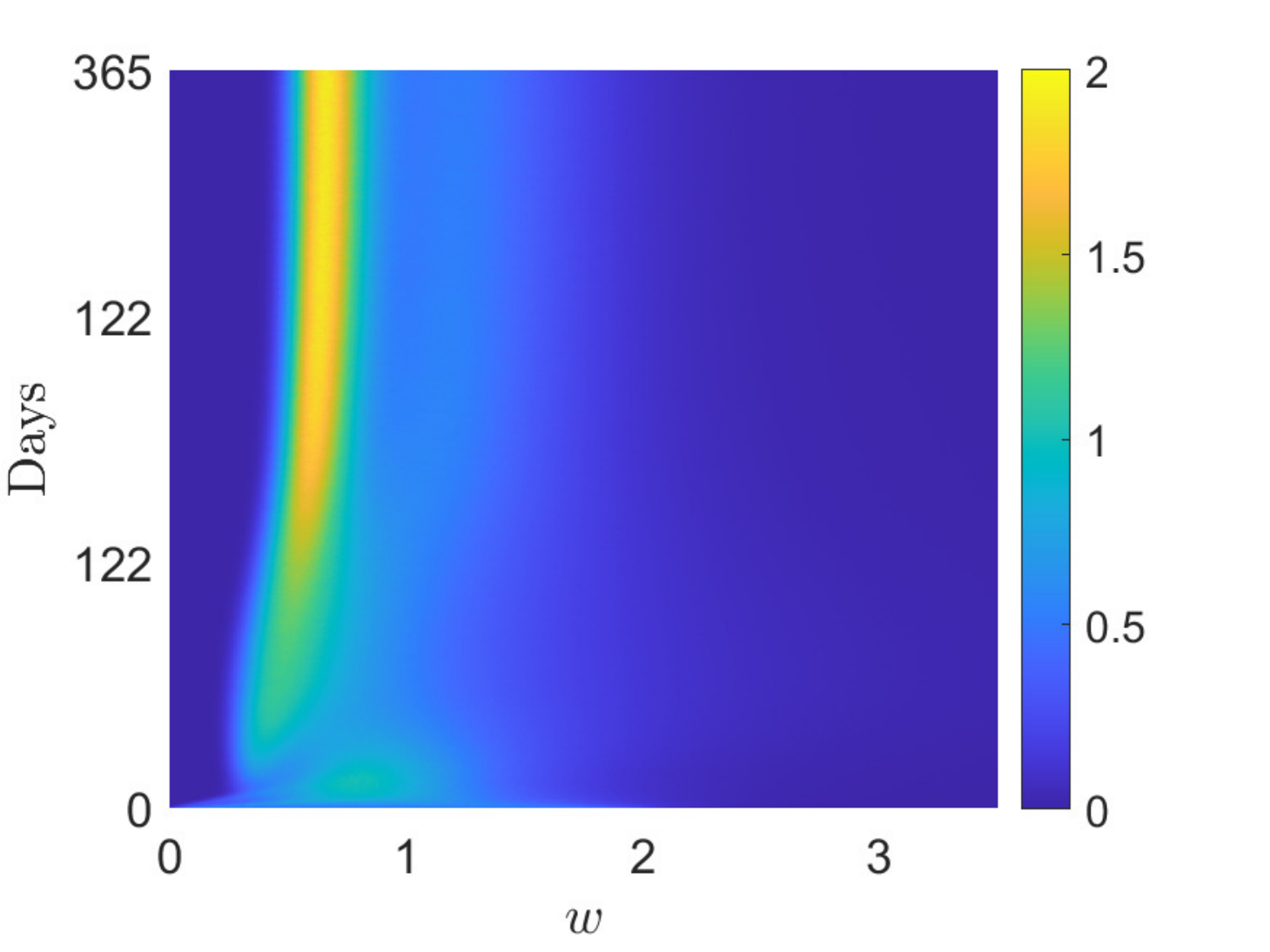}
		\includegraphics[scale = 0.55]{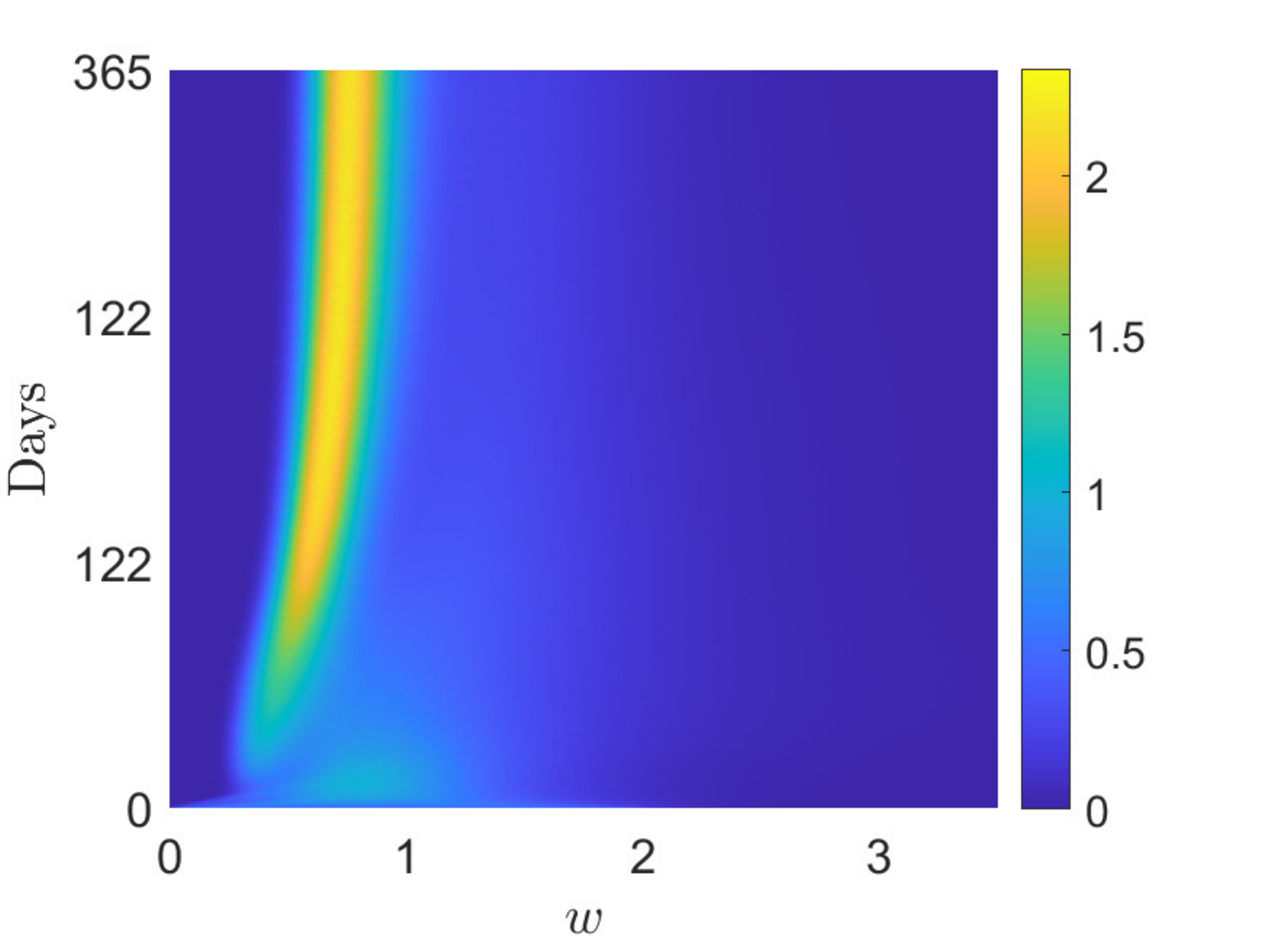}
	\caption{\textbf{Test 2}.  Time evolution of the wealth distribution of the kinetic model \eqref{eq:kin_SIVR} in the scaling $\epsilon = 5 \times 10^{-2}$ with vaccine efficacy $\zeta = 0.55$ (left column) or $\zeta = 0.95$ (right column) and with constant market risk $\sigma^2 = 0.02$ (top row) or $\sigma^2(t)$ defined in \eqref{eq:varsigma} with $\mu = 50$. In all the evolutions we considered $\lambda_S=0.10$, $\lambda_I=0.07$, $\lambda_V=0.30$ $\lambda_R=0.15$. The initial distribution has been defined in \eqref{eq:f0}-\eqref{eq:fractions}.
	 In the left image we have the evolution of the wealth distribution for the kinetic model \eqref{eq:kin_SIVR} in the scaling parameter $\epsilon=5\times10^{-2}$ with $\zeta=0.95$; whereas, in the right image we have the comparison between the behaviors of Gini index with vaccine effectiveness equal to $95\%$ (green line) and $65\%$ (red line). In both images we considered a variable market risk \eqref{eq:varsigma} with $\sigma_0^2=0.02$ and $\mu=50$ and $\lambda_S=0.10$, $\lambda_I=0.07$, $\lambda_V=0.30$ $\lambda_R=0.15$. }
	 \label{fig:6}  
\end{figure}

In the following we choose $\mu=50$ and $\sigma^2_0=0.02$. In Figure \ref{fig:2c} we represent the evolution of $\sigma^2(t)$ in presence of an epidemic characterized by $\bar\beta = 0.15$, $\gamma_I = 1/10$. Furthermore, we compare the Gini index in presence of two effectiveness of the vaccine, i.e. $\zeta=0.95$ and $\zeta=0.55$. We may easily observe how an increasing variability lead to a worsening of the Gini index and, therefore, of the inequalities. The large time behavior of the Gini index depends, as before, by the vaccine efficacy $\zeta$ such that low efficacy leads to increasing inequalities in the long time. This is due to the fact that as $t \rightarrow +\infty$ we have $I \rightarrow 0^+$ and then $\sigma^2(t) \rightarrow \sigma^2_0$.

Finally, in Figure \ref{fig:6} we present the evolution of the full kinetic density solution to \eqref{eq:kin_SIVR} in the scaling $\epsilon = 5 \times 10^{-2}$ in presence of fixed marked risk $\sigma^2$ or with the epidemic-dependent $\sigma^2(t)$ discussed in \eqref{eq:varsigma}. 

\subsection{Nonlinear incidence rate and time-varying vaccine efficacy}
In this last test case,  to model different frequency of interactions between agents that belong to the same wealth class, we introduce a wealth-dependent contact rate $\beta(w,w_*)$ of the form 
\be
\label{eq:nonlinearbeta}
\beta(w,w_*)=\frac{\bar\beta}{(c+|w-w_*|)^{\nu}},
\ee
where $\bar\beta$, $c, \nu>0$. We have depicted the above contact rate in Figure \ref{fig:beta}.
\begin{figure}
\centering
\includegraphics[scale = 0.55]{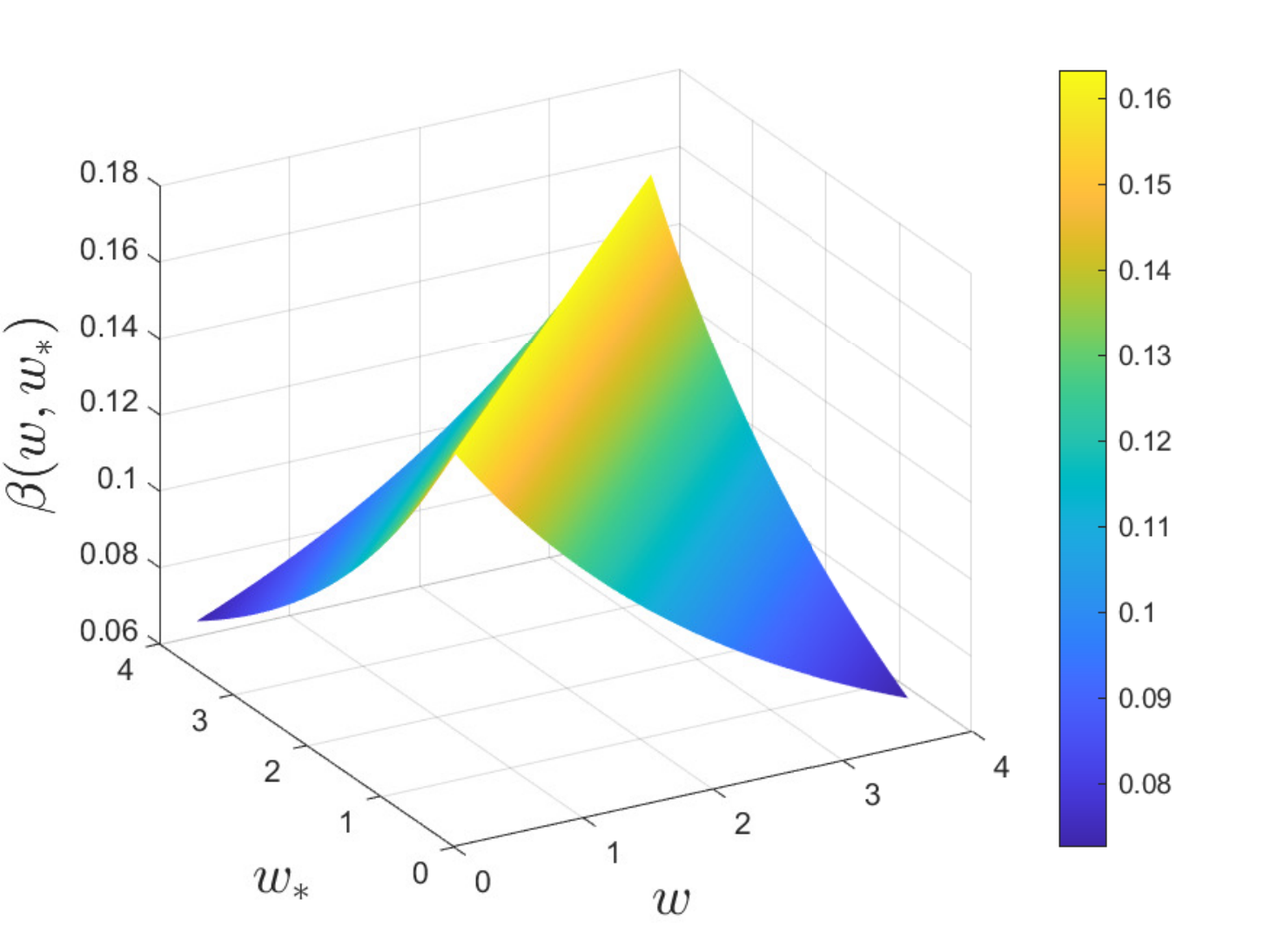}
\caption{\textbf{Test 3}. Wealth-dependent contact rate $\beta(w,w_*)$ of the form \eqref{eq:nonlinearbeta} with $\bar\beta=8$, $c=7$, $\nu=2$.  }
\label{fig:beta}
\end{figure}
We introduce also a time-dependent efficacy of the vaccine $\zeta$ of the form
\be
\label{eq:zeta}
\zeta(t)=\zeta_0-\psi \int_0^t \int_{\mathbb R_+} f_I(w,t)dwds = \zeta_0-\psi \int_0^t I(s)ds,
\ee
with $\zeta_0 \in [0,1]$ the initial efficacy of the vaccine and $0<\psi\le  \zeta_0$. This time dependence in vaccine coverage describes in a simplified way the fact that with more infected individuals it is more likely to encounter mutations of the original virus for which the vaccine is less effective. In the following, we compare the evolution of the wealth inequalities in presence of two different values $\zeta_0$. 

Furthermore, to make the modeling more realistic, we assume loss of immunity of the agents in the compartment $R$. To this end we have to modify the first and last equation of the model \eqref{eq:kin_SIVR} as follows
\begin{equation}
\label{eq:kin_re}
\begin{split}
\partial_t f_S(w,t) &= -K(f_S,f_I)(w,t) - \alpha f_S(w,t)  + \gamma_R f_R(w,t)+ \sum_{J \in \{S,I,V,R	\}} Q_{SJ}(f_S,f_J)(w,t) \\
\partial_t f_R(w,t) &= \gamma_I f_I(w,t)  - \gamma_R f_R(w,t)+ \sum_{J \in \{S,I,V,R	\}} Q_{RJ}(f_R,f_J)(w,t), 
\end{split}
\end{equation}
where $\gamma_R\ge 0$ is the rate expressing the loss of immunity of recovered agents. Note that  this last assumption substantially changes the epidemic dynamics, since asymptotically, instead of a disease free scenario, we will have the emergence of endemic states \cite{H, FP}.  

\begin{figure}
	\centering
	\includegraphics[scale = 0.55]{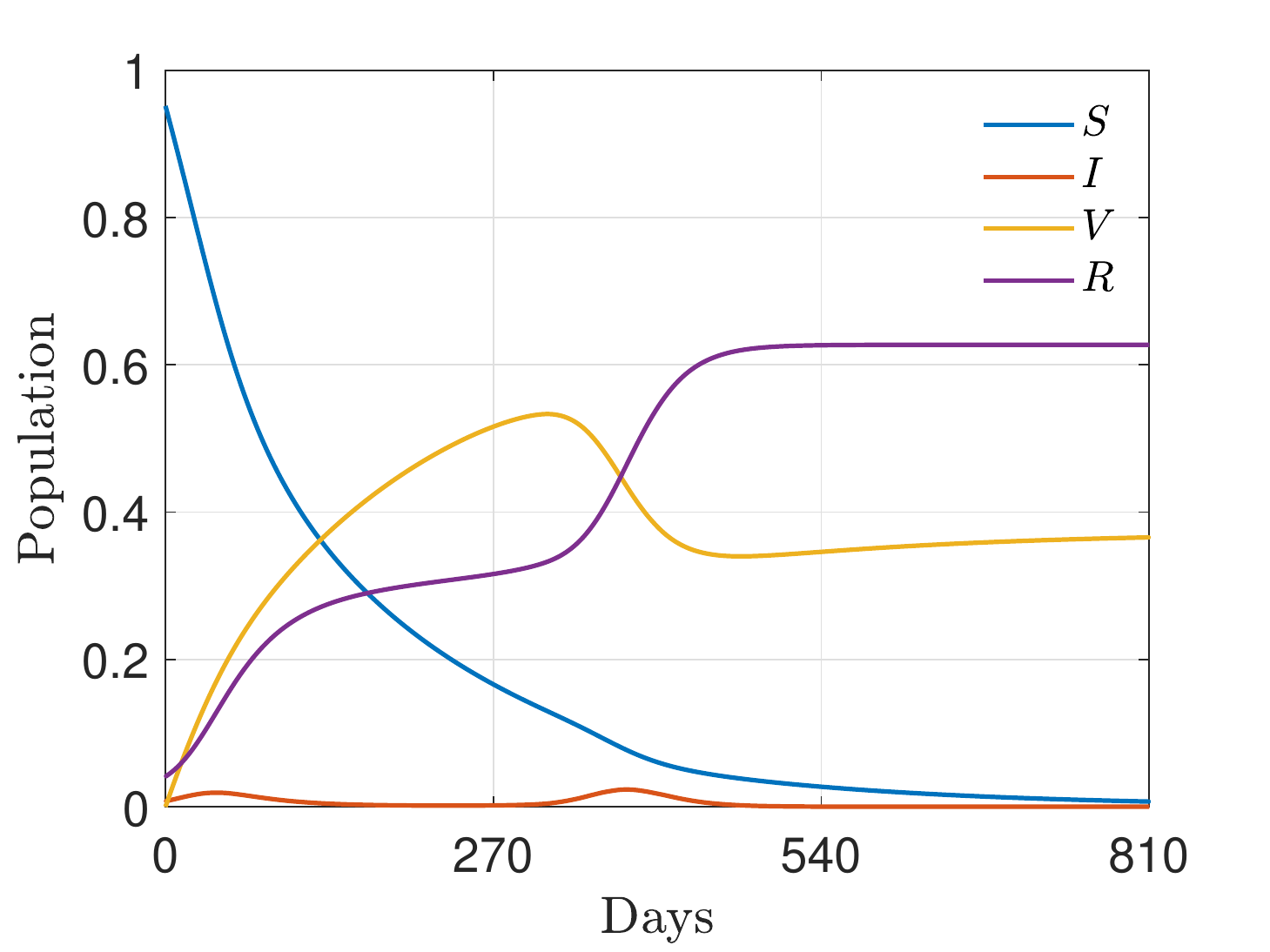}
	\includegraphics[scale = 0.55]{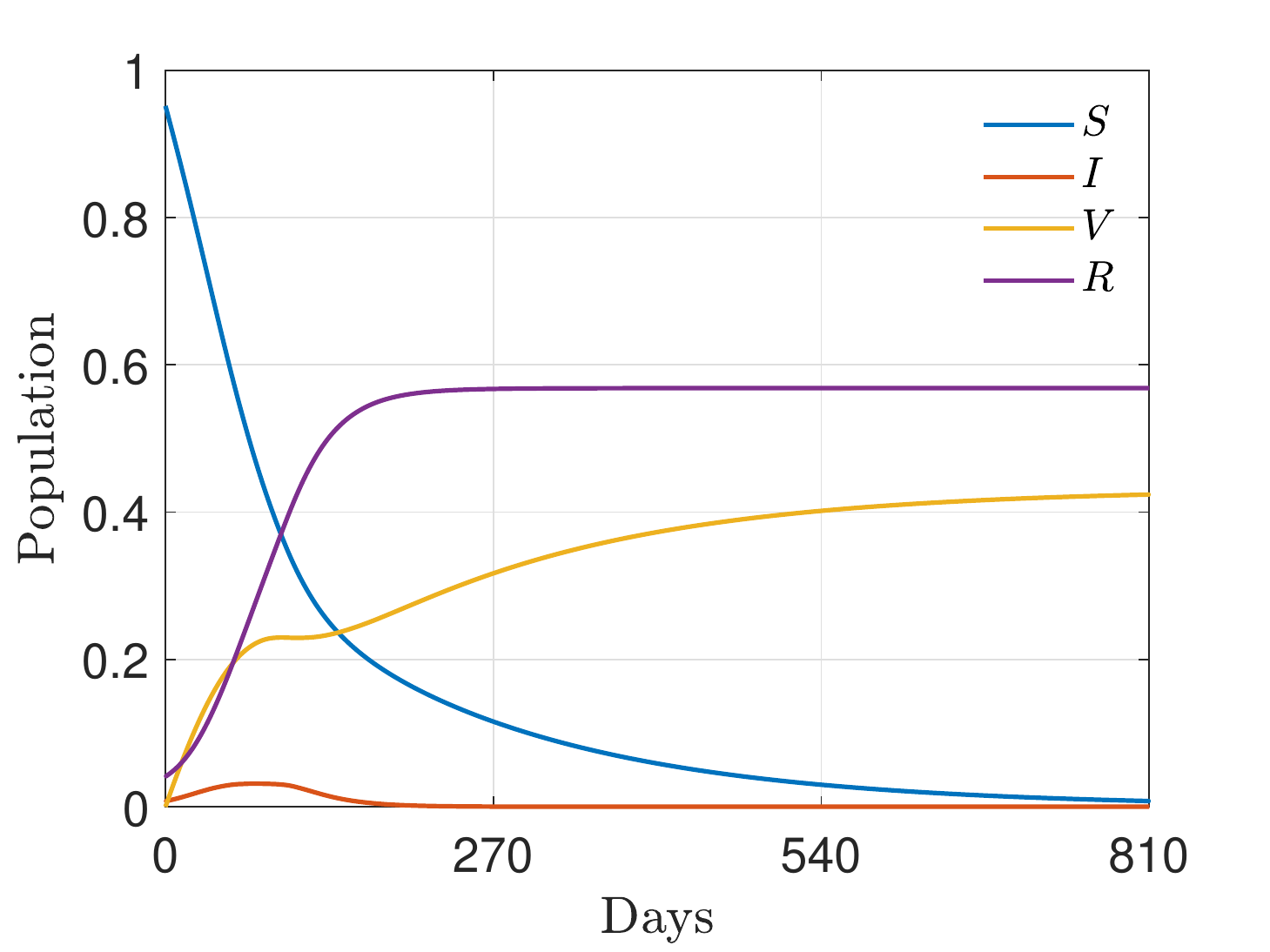}
	\includegraphics[scale = 0.55]{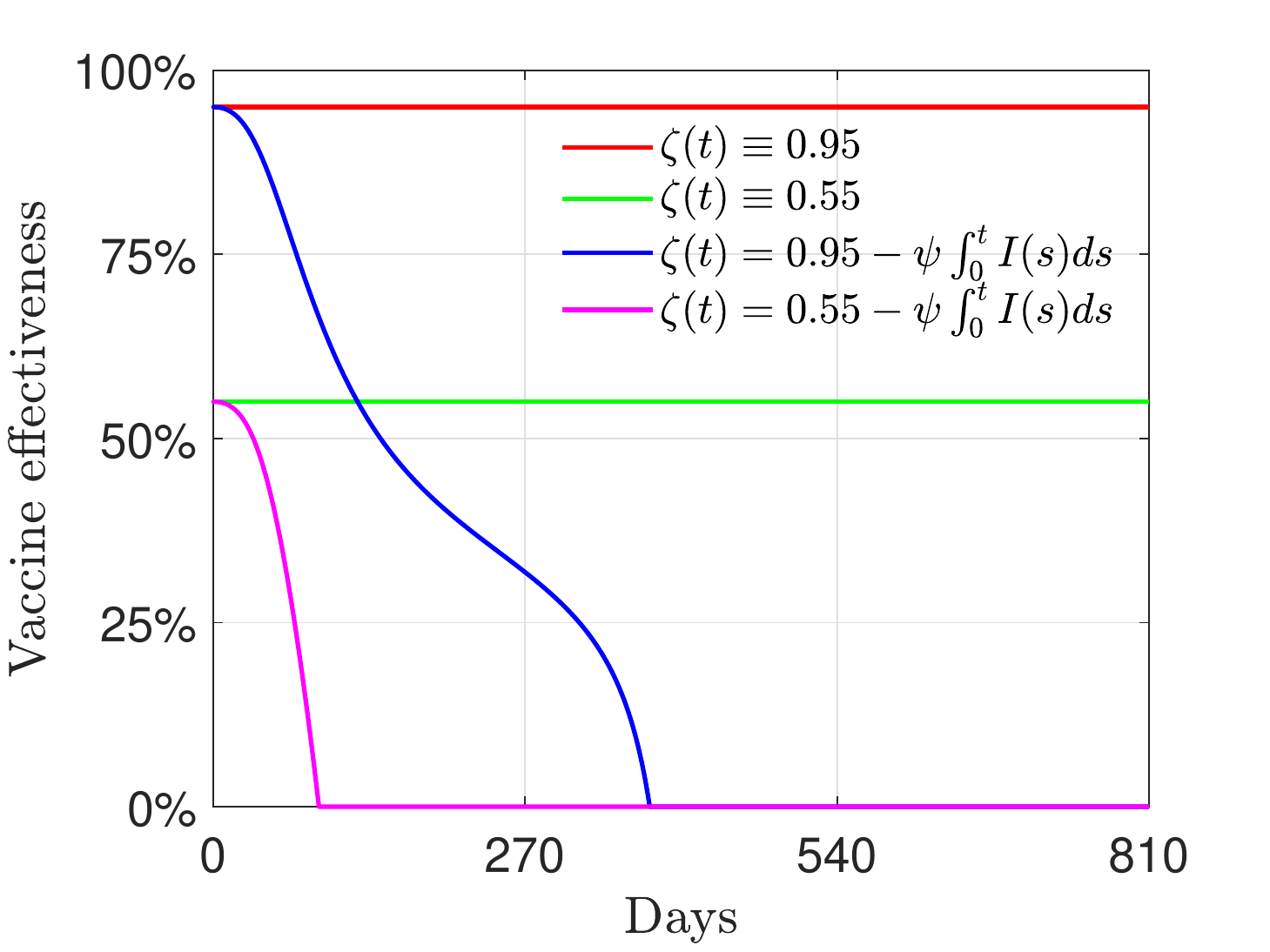}
	\includegraphics[scale = 0.55]{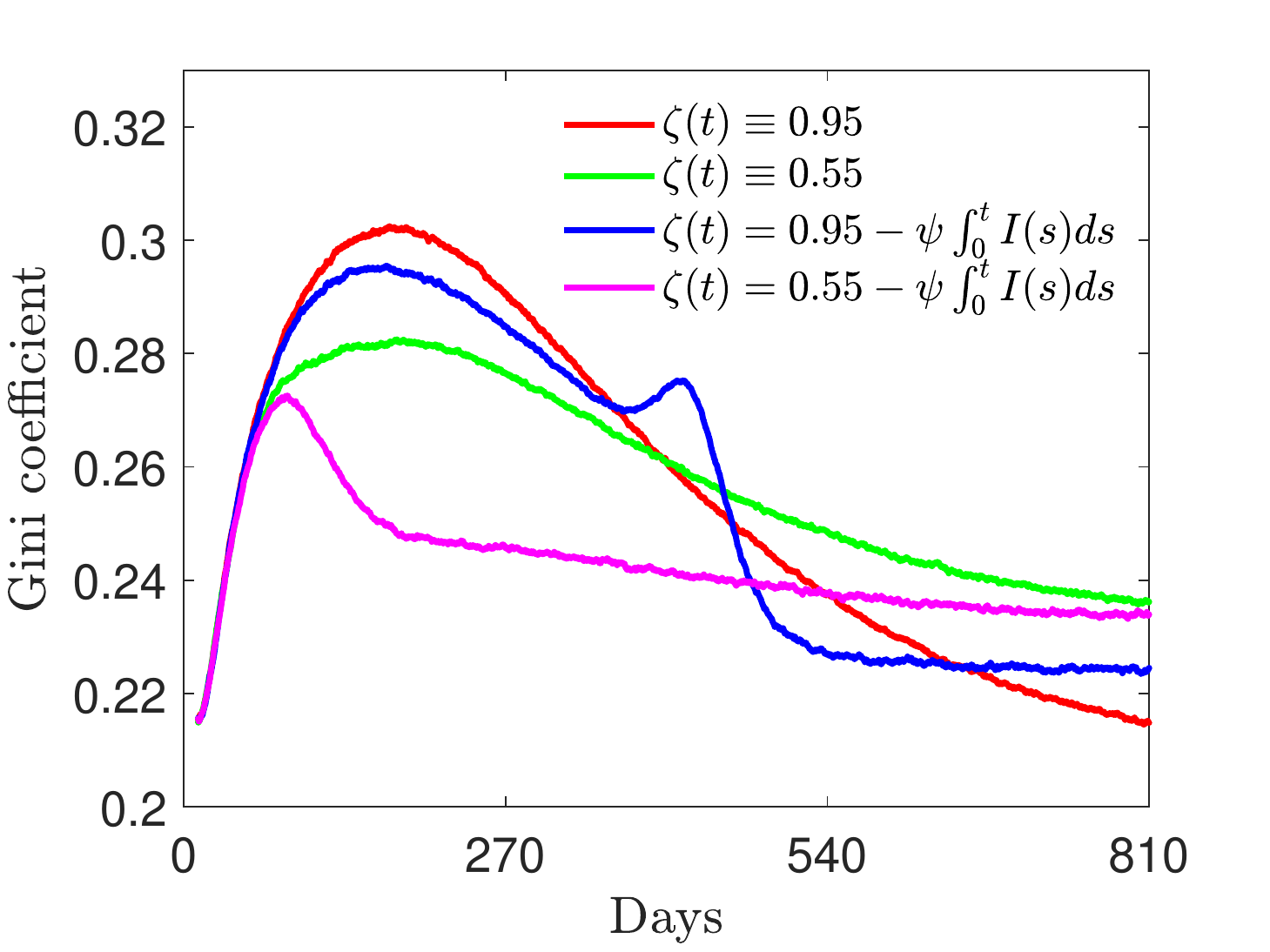}
	\caption{\textbf{Test 3A}. Top row: epidemic dynamics with wealth-dependent  $\beta(w,w_*)$ defined in \eqref{eq:nonlinearbeta} with  $\bar\beta=8$, $c=7$, $\nu=2$, $\gamma_I = 1/12$, $\alpha = 0.005$ and variable $\zeta$ as in \eqref{eq:zeta} with $\psi = 0.005$. We considered $\zeta_0 = 0.95$ (left) and $\zeta_0= 0.55$ (right). The initial distribution is \eqref{eq:f0} with mass fractions \eqref{eq:fractions}. Bottom row: decline of the vaccine efficacy due to the presence of high number of infectives (left) and evolution of the Gini index (right) for variable infection rate $\beta(w,w_*)$ as in \eqref{eq:nonlinearbeta} and vaccine effectiveness $\zeta(t)$ as in \eqref{eq:zeta}. We considered $\lambda_S=0.10$, $\lambda_I=0.07$, $\lambda_V=0.25$, $\lambda_R=0.15$ and $\bar\beta=8$, $c=7$, $\nu=2$, $\psi =0.005$.  }
	\label{fig:11}
\end{figure}

\subsubsection{Test 3A: $\gamma_R = 0$}

First, we consider model \eqref{eq:kin_SIVR} without the modified relations \eqref{eq:kin_re} (or equivalently, in absence of reinfection, i.e. $\gamma_R = 0$) and, as before, a fixed recovery rate $\gamma_I=1/12$ and vaccination rate $\alpha=0.005$ with the same initial masses defined in \eqref{eq:fractions}. Furthermore, we fixed $\psi = 0.005$. In Figure \ref{fig:11}, top row, we show the evolution for the fractions of the population in case of $\zeta_0=0.95$ (left) and $\zeta_0=0.55$ (right). We may observe how a variable efficacy of the vaccine that is affected by epidemic peaks may strongly shape the immunity of the population even in presence of initial high efficacy. Interestingly, in this latter case, a variable efficacy leads to the emergence of secondary peaks of infection. This is due to the presence of a smaller number of recovered persons who, unlike vaccinated persons, maintain immunity.



In Figure \ref{fig:11}, bottom-left row, we observe the evolution of the resulting vaccine efficacy for $\zeta_0 = 0.95$, $\zeta_0 = 0.55$ and $\psi = 0.005$. The vaccine efficacy is degraded by the epidemic dynamics due to the increasing of the infected compartment with a slower efficacy decay for high initial $\zeta_0$.  

For the same choice of coefficient, in the bottom-right plot of Figure \ref{fig:11}, we show the evolution of the Gini coefficient in the case of variable efficacy as \eqref{eq:zeta}. With respect to a vaccination with constant efficacy, the efficacy decay forces the emergence of sharper inequalities,  well evidenced by the evolution of Gini coefficient.  


\begin{figure}
\centering
\includegraphics[scale = 0.5]{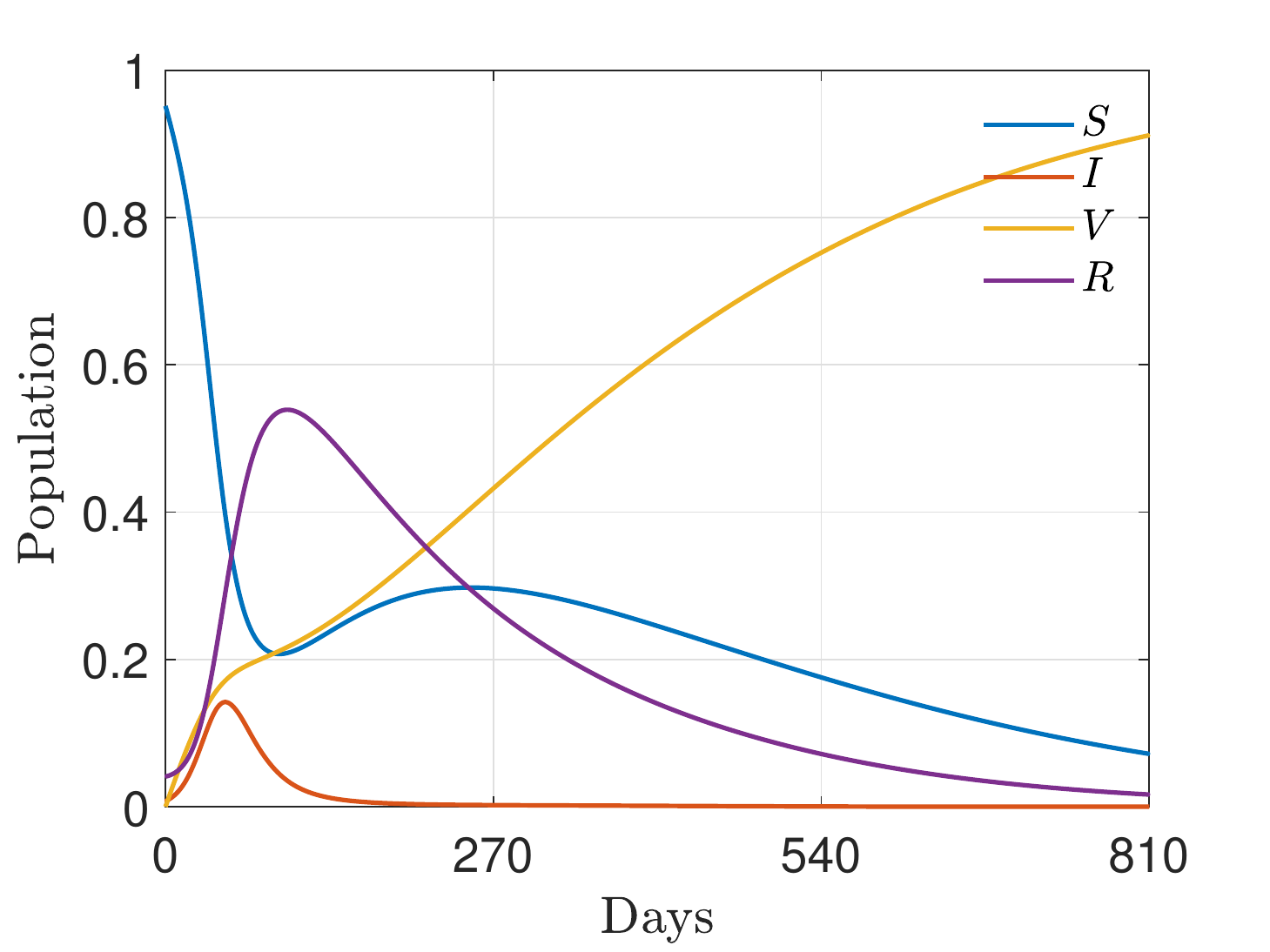}
\includegraphics[scale = 0.5]{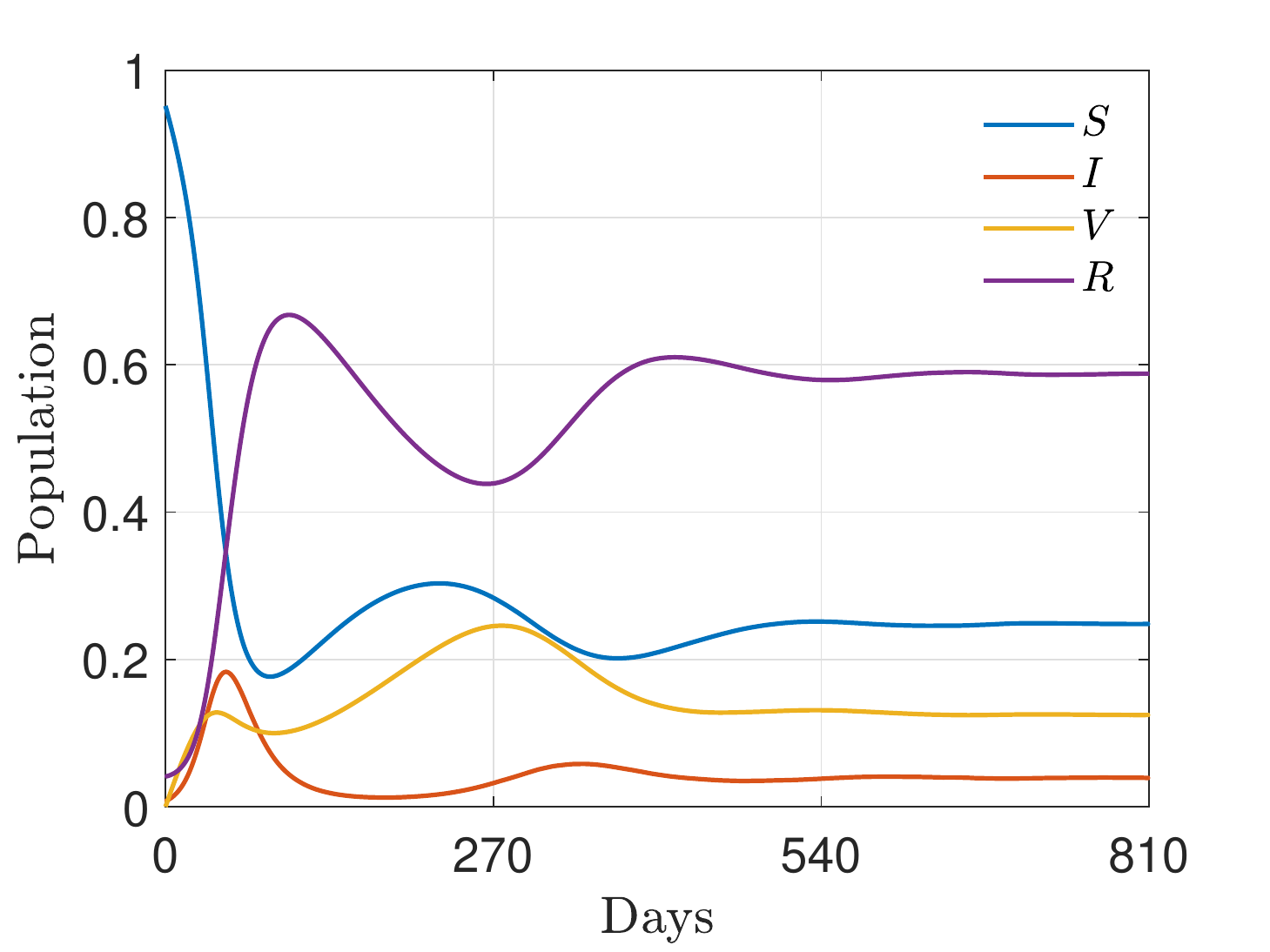} \\
\includegraphics[scale = 0.5]{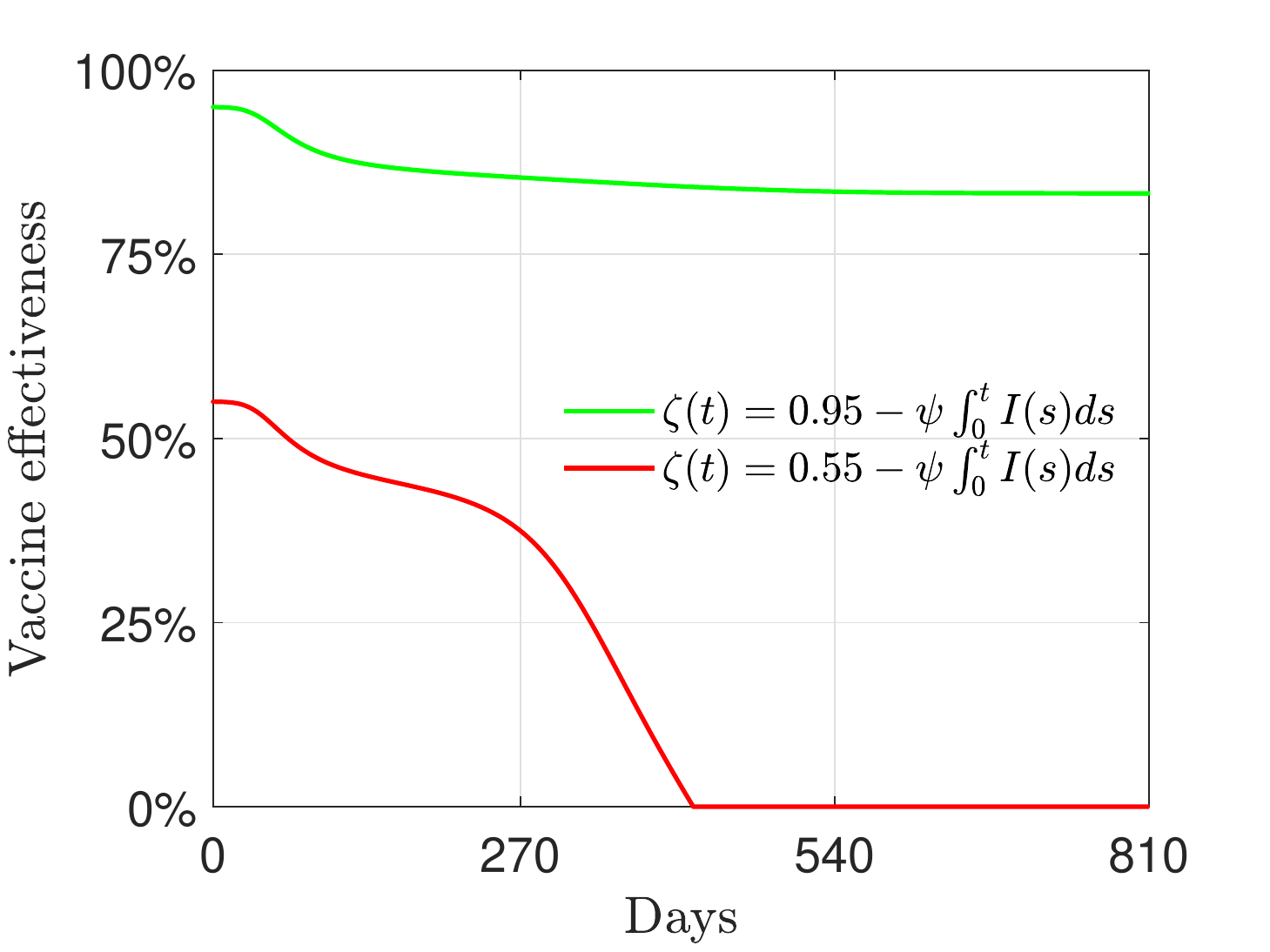}
\includegraphics[scale = 0.5]{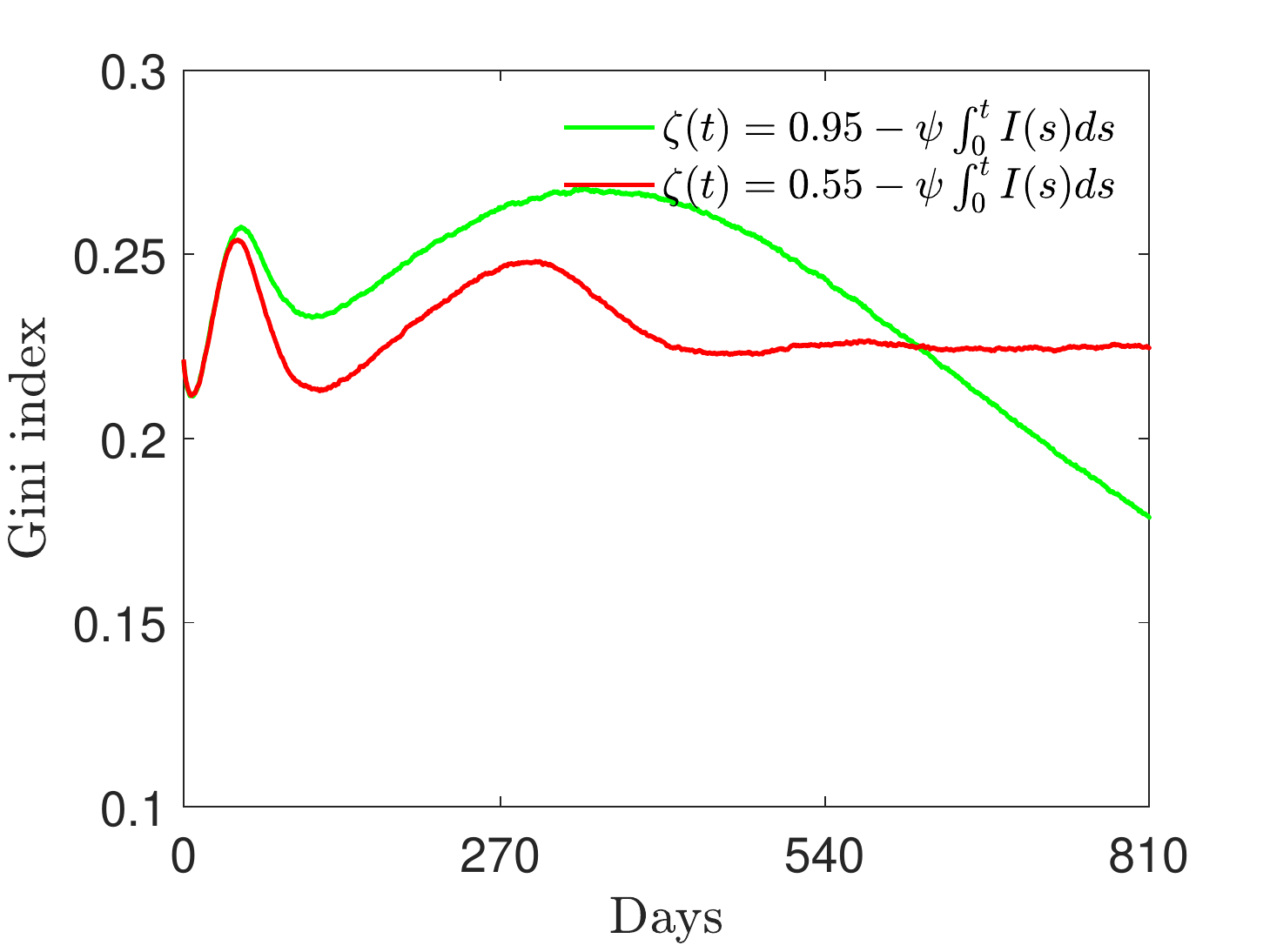}
\caption{\textbf{Test 3B}. Top row: epidemic dynamics with wealth-dependent $\beta(w,w_*)$ defined in \eqref{eq:nonlinearbeta} with  $\bar\beta=8$, $c=7$, $\nu=2$, $\gamma_I = 1/12$, $\gamma_R = 1/180$, $\alpha = 0.005$ and variable $\zeta$ as in \eqref{eq:zeta} with $\psi = 1.5\times 10^{-4}$. We considered $\zeta_0 = 0.95$ (left) and $\zeta_0= 0.55$ (right). The initial distribution is \eqref{eq:f0} with mass fractions \eqref{eq:fractions}.  Bottom row:  decline of the vaccine efficacy due to the presence of high number of infected (left) and evolution of the Gini index (right). We considered $\lambda_S=0.10$, $\lambda_I=0.07$, $\lambda_V=0.25$, $\lambda_R=0.15$ and $\bar\beta=8$, $c=7$, $\nu=2$. }
\label{fig:3B}
\end{figure}

\subsubsection{Test 3B: $\gamma_R >0$}

Finally, we consider model \eqref{eq:kin_SIVR} including the modified equations \eqref{eq:kin_re}, with a reinfection period of 180 days, i.e. $\gamma_R = 1/180$ and, as before, a fixed recovery rate $\gamma_I=1/12$ and vaccination rate $\alpha=0.005$ with the same initial masses defined in \eqref{eq:fractions}. In the first row of Figure \ref{fig:3B} we present two epidemic dynamics with nonlinear contact rate \eqref{eq:nonlinearbeta} and the  time dependent efficacy $\zeta(t)$ defined in \eqref{eq:zeta} with $\psi = 1.5 \times 10^{-4}$. In the left plot we present the case of strong initial vaccine efficacy $\zeta_0 = 0.95$ and in the right plot the case of mild initial vaccine efficacy $\zeta_0 = 0.45$. The macroscopic dynamics present an endemic equilibrium due to the presence of the reinfection rate $\gamma_R$. Also in contrast to the previous case, in the case of reduced initial efficacy of the vaccine,  a second infection wave is emerging.

Looking at the bottom-left plot we observe that, in the present regime of parameters, a strong initial vaccine efficacy is robust with respect to the efficacy decay due to epidemic waves. On the other hand, mild initial efficacies can dissipate their positive influence on the evolution of the infection. 
At the level of the evolution of the Gini index, in presence of reinfection, it appears even more evident that inequalities appear for large times in presence of mild vaccinations. Nevertheless, in the transient regimes,  the higher possibility to invest wealth for vaccinated agents, may create temporary inequalities.

\section*{Conclusions}
The widespread vaccination campaign undertaken in Western countries to counteract the evolution of the Covid-19 epidemic and its economic effects depends in large part on the efficacy of vaccines. Mathematical models capable of predicting the evolution of the economy in relation to the effectiveness of the vaccination campaign can play a fundamental role in configuring possible scenarios and suggesting further measures to be taken by governments. In this paper we analyzed, at the level of wealth distribution, the economic improvements induced by the vaccination campaign in terms of its percentage of effectiveness. Following the ideas developed in \cite{Albi_etal,DPTZ}, the interplay between the economic trend and the pandemic has been evaluated resorting to a mathematical model combining a kinetic model for wealth exchanges based on binary interactions with a classical SIR compartmental epidemic model including the compartment of vaccinated individuals. Although the model introduced necessarily represents a strong simplification of an extremely complex phenomenon, its qualitative behavior is capable of describing essential features of the pandemic's impact on individuals' wealth. A key aspect of the model is, in fact, the possibility of obtaining explicit configurations of the stationary wealth distributions in the form of inverse Gamma densities, with the essential parameters depending on the percentage of vaccinated and recovered individuals, thus relating the effectiveness of the vaccination campaign to the formation of wealth inequalities. Several numerical experiments have also been conducted to quantify how a highly effective vaccination campaign has a direct effect on the decrease over time of the Gini coefficient, a classic measure of inequality in the distribution of wealth in Western societies. As a concluding remark we highlight how the approach introduced here based on the combination of classical compartmental models with wealth exchange processes lends itself naturally to numerous generalizations, both in terms of economic interactions and in terms of epidemic dynamics.

\section*{Acknowledgment}
This work has been written within the activities of the GNFM and GNCS groups of INdAM (National Institute of High Mathematics). M.Z. acknowledge partial support of MUR-PRIN2020 Project "Integrated mathematical approaches to socio-epidemiological dynamics". The research of M.Z. was partially supported by MIUR, Dipartimenti di Eccellenza Program (2018–2022), and Department of Mathematics “F. Casorati”, University of Pavia. The research of L.P. was partially supported by FIR2021 project “No hesitation. For effective communication of Covid-19 vaccination”, University of Ferrara.

\end{document}